\documentclass[twocolumn]{autartSofie}  
\usepackage{amsmath,amsfonts,amssymb,mathtools}
  \usepackage{graphicx,subcaption}
  \usepackage{graphics,sidecap}

\usepackage{tikz}
\usetikzlibrary{decorations.shapes,shapes.geometric,shadows,arrows,automata,positioning,calendar,mindmap,backgrounds,scopes,chains,er,patterns}
\usepackage{color} 
\usepackage{pgfplots}
\usepackage{paralist}
 
\newcommand{\citep}[1]{\cite{#1}}
\newenvironment{proof}{\textbf{Proof}%
}{%
\hfill$\Box$
}
 \newcommand{\alphabeth}{{\Sigma}}
 \newcommand{\Var}{{R}}
 \newcommand{\var}{{\sigma}}
 \newcommand{\letter}{{\alpha}}
	\newtheorem{problem}{Problem}
\newtheorem{proposition}[thm]{Proposition}

\newtheorem{definition}{Definition}

\newcommand{\Real}{\mathbb R}        

\newcommand{\s}{\mathbf{S}}    
\newcommand{\Mset}{\mathcal{G}} 
\newcommand{\M}{\mathbf{M}}    

\newcommand{\X}{\mathbb{X}}     
    \newcommand{\x}[1]{\mathbf{x}(#1)} 
    \newcommand{\xT}{\mathbf{x}^T} 

\newcommand{\A}{\mathbb{U}}     
    \newcommand{\ac}[1]{u(#1)}         
\newcommand{\Y}{\mathbb{Y}}     
    \newcommand{\y}[1]{\hat{y}(#1)}          
    \newcommand{\ym}[1]{\tilde{y}(#1)}   
    \newcommand{\yo}[1]{y_0(#1)}       


\newcommand{\BLTL}{\psi}        
\newcommand{\ap}{{p_i}}            

      %

\newcommand{\always}{\Box}

\newcommand{\until}{\mathbin{\sf U}}

\newcommand{\nex}{\mathord{\bigcirc}}

\newcommand{\Nsample}{{N_s}}       
\newcommand{\pa}{\theta}        
\newcommand{\parTrue}{\pa^0}               


\newcommand{\p}[1]{\mathbf{P}\left(#1\right)}     
\newcommand{\pd}[1]{p\left(#1\right)}     


\usepackage{caption}
\allowdisplaybreaks
    \makeatletter
    \def\ps@pprintTitle{%
      \let\@oddhead\@empty
      \let\@evenhead\@empty
      \def\@oddfoot{\reset@font\hfil\thepage\hfil}
      \let\@evenfoot\@oddfoot
    }
    \makeatother
\begin{document} 

\begin{frontmatter}
\title{Data-driven and Model-based Verification:\\ 
a Bayesian Identification Approach}

\author[TUE]{Sofie Haesaert}\ead{s.haesaert@tue.nl},      
\author[TUE]{Paul M.J. Van den Hof}\ead{P.M.J.Vandenhof@tue.nl},  
\author[OX]{Alessandro Abate}\ead{alessandro.abate@cs.ox.ac.uk}       
               \address[TUE]{
                Department of Electrical Engineering, 
       Eindhoven University of Technology, 
       Eindhoven, The Netherlands}%
       \address[OX]{Department of Computer Science, 
       University of Oxford, 
       Oxford, United Kingdom
       }
\begin{keyword}Temporal logic properties, 
Bayesian inference, Linear time-invariant models,
Model-based verification, 
Data-driven validation, 
Statistical model checking, 
\end{keyword}
\begin{abstract}
This work develops a measurement-driven and model-based 
formal verification approach, applicable to systems with partly unknown dynamics. 
We provide a principled method, 
grounded on reachability analysis and on Bayesian inference, 
to compute the confidence that a physical system driven by external inputs and accessed under noisy measurements, 
verifies a temporal logic property.  
A case study is discussed, 
where we investigate the bounded- and unbounded-time safety of a partly unknown linear time invariant system. 
\end{abstract}
\end{frontmatter}
 
\maketitle
\section{Introduction}\label{Introduction}  
 The design of complex, high-tech, safety-critical systems such as autonomous vehicles, intelligent robots, 
and cyber-physical infrastructures, demands guarantees on their correct and reliable behaviour.  
Correct functioning and reliability over models of systems can be attained by the use of formal methods.  
Within the computer sciences, the formal verification of software and hardware has successfully led to industrially relevant and impactful applications \cite{Clarke2008}. 
Carrying the promise of a decrease in design faults and implementation errors and of correct-by-design synthesis,  
the use of formal methods, such as model checking \cite{Clarke2008}, has become a standard in the avionics, automotive, and railway industries \cite{Vardi2006}. 
Life sciences \cite{belta2002control,del2009engineering} and general engineering applications \cite{Belta2007,burdick} have also recently pursued the extension of these successful techniques from the computer science:  
this has required a shift from finite-state to physical and cyber-physical models that are of practical use in nowadays science and technology \cite{lee08,Tabuada2009}.  

The strength of formal techniques, such as model checking, 
is bound to the fundamental requirement of having access to a given model,
obtained from the knowledge of the behaviour of the underlying system of interest.
In practice, for most physical systems the dynamical behaviour is known only in part:  
this holds in particular with biological systems \cite{AHW12}   
or with classes of engineered systems where, as a consequence, 
the use of uncertain control models built from data is a common practice \cite{hjalmarsson2005experiment}. 
 
 
Only limited work within the formal methods community deals with the verification of models with partly unknown dynamics. 
Classical results \cite{Batt2007,Henzinger1996} consider the verification problem for non-stochastic models described by differential equations and with bounded parametric uncertainty. 
Similarly, but for continuous time probabilistic models, \cite{Bortolussi2014,Ceska} explore the parameter space with the objective of model verification (respectively statistical or probabilistic).  
Whenever full state measurements of the system are available, 
Statistical Model Checking (SMC) \cite{sen2004statistical,Legay} 
replaces model(-based) checking procedures with empirical testing of formalised properties.
SMC is limited to fully observable  stochastic systems with little or no non-determinism, 
and may require the gathering a large set of measurements.  
Extensions towards the inclusion of non-determinism have been studied in \cite{Henriques2012,Legay2013}, 
with preliminary steps towards Markov decision processes. 
Related to SMC techniques, but bound to finite state models, 
\cite{Chen2012,Mao2012a,Sen2004} assume that the system is encompassed by a finite-state Markov chain and efficiently use data 
to learn the corresponding model and to verify it. 
 Similarly, 
\cite{Bartocci2013,bortolussi:learning} employ machine learning techniques to infer finite-state Markov models from data over specific logical formulae.   
 
An alternative approach, 
allowing both partly unknown dynamics over uncountable (continuous) variables and noisy output measurements, 
is the usage of a Bayesian framework relating the confidence in a formal property to the uncertainty of a model built from data. 
When applied on nonlinearly parameterised linear time invariant (LTI) models 
this approach introduces huge computational problems, which 
as proposed in \cite{Gyori2014}, can only be mitigated by statistical methods.  
Instead, to obtain reliable and numerical solutions, we propose the use of linearly parameterised model sets defined through orthonormal basis functions to represent these partially unknown systems. This is a broadly used framework in system identification \cite{heuberger2005modelling,hjalmarsson2005experiment}:  
it allows for the incorporation of prior knowledge, while maintaining the benefits (computational aspects) of linear parameterisations. 
Practically, it has been widely used for the modelling of physical systems, such as the thermal dynamics of buildings \cite{Virk1994}.  
In contrast,  in this paper we  pursue a promising new numerical approach: 
instead of employing directly a nonlinearly parameterised model, 
we embed it in a linearly parameterised one 
via a series expansion of orthonormal basis functions.  

In this contribution we further analyse and extend the related results in \cite{ACCSofie}, obtained for a time-bounded subset of temporal logic properties, to unbounded-time temporal logic properties, and analyse their robustness.

\section{General Framework and Problem Statement} \label{sec:2}

In this section, we provide a novel methodology to verify whether a system $\s$
satisfies a specification 
$\psi$, formulated in a suitable temporal logic, 
by integrating the partial knowledge of the system dynamics with data obtained from a measurement set-up around the system. 

Let us further clarify this framework. 
Let us denote with $\s$ a physical system, 
or equivalently the associated dynamical behaviour.   
A signal input  $\ac{t}\in \A, t\in \mathbb N$, captures how the environment acts on the system.  
Similarly, an output signal $\yo{t}\in \Y$ indicates how the system interacts with the environment, 
or alternatively how the system can be measured.  
Note that the input and output signals are assumed to take values over continuous domains.
The system dynamics can be described via mathematical models, 
which express the behavioural relation between its inputs and outputs.  The knowledge of the behaviour of the system is often limited or uncertain, making it impossible to analyse its behaviour via that of a ``true'' model.  
In this case,  
a-priori available knowledge allows to construct a model set $\Mset$ with elements $\M\in\Mset$: 
this model class supports the structured uncertainty as a 
distribution over a parameterisation $\pa\in\Theta$, 
$\Mset=\{\M(\pa)|\pa\in\Theta\}$. 
The unknown ``true'' model $\M(\pa^0)$ representing $\s$, 
is assumed to be an element of $\Mset$, 
namely $ \pa^0\in \Theta$: 
as an example, model sets $\Mset$ obtained through first principles adhere to this classical assumption.

Samples can be drawn from the underlying physical system via a measurement set-up, 
as depicted in Figure \ref{fig:System}.  
An experiment consists of a finite number ($\Nsample$) of input-output samples drawn from the system, 
and is denoted by $Z^\Nsample=\{\ac{t}_{ex},\ym{t}_{ex}\}_{t=1}^\Nsample$,
where $\ac{t}_{ex}\in \A$  is the input for the experiment and  $\ym{t}_{ex}$ is a (possibly noisy) measurement of $\yo{t}_{ex}$. 
In general, the measurement noise can enter non-additively and be a realisation of a stationary stochastic process.\footnote{Both the operating conditions of the experiment, that is the input signal $\ac{t}_{ex}$ and the initial state $\x{0}_{ex}$, and the measurements have been indexed with $ex$ to distinguish them from the operating conditions of interest for verification, to be discussed shortly.}   
We assume that at the beginning of the measurement procedure (say at $t=0$), 
the initial condition of the system, encompassed by the initial state $\x{0}_{ex}$ of models in $\M$, 
is either known, or, when not known, has a structured uncertainty distribution based on the knowledge of past inputs and/or outputs.  
As reasonable, we implicitly consider only well-defined problems, 
such that for any model representing the system,  
given a signal input $\ac{t}_{ex}$ and an (uncertainty distribution for) $\x{0}_{ex}$, 
the probability density distribution of the measured signal can be fully characterised. 
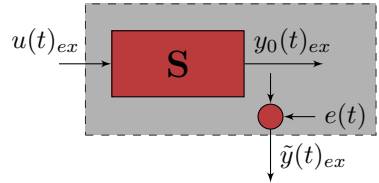
\begin{figure}[h]
 \begin{minipage}[b]{\columnwidth}\definecolor{tuecyan}{RGB}{0,0,173}
\definecolor{tuewarmred}{RGB}{170,10,13}
\definecolor{tueblue}{RGB}{10,10,173}
\centering
    \tikzstyle{block} = [draw, fill=tuewarmred!80, rectangle,
    minimum height=2.5em, minimum width=5em]
    \tikzstyle{sum} = [draw, fill=tuewarmred!80, circle, node distance=3em]
    \tikzstyle{input} = [coordinate]
    \tikzstyle{output} = [coordinate]
    \tikzstyle{pinstyle} = [pin edge={to-,thin,black}]
\begin{tikzpicture}[auto,  node distance=3em,>=latex',scale=1]
\begin{scope}[overlay]
    \node [input,name=input,scale=2,label={\textcolor{black}{ $\ac{t}_{ex}$ \ \ }}] { };
    \node [block, right of=input, node distance=4.5em] (system) {\large \(\s\)};
    \node [right of=system,node distance=3.5em] (split) {};
    \node [output, right of=split, node distance=2em] (out1) { };
    \node [sum, below of=split,node distance=2em ] (sumNoise) {};
    \node[name=et,node distance=1 cm,right of= sumNoise]{$e(t)$};
    \node [output, below of=sumNoise, node distance=2.5em] (out2) { };
    \draw [->] (input) -- node[name=u] {} (system);
    \draw [->] (system) -- node [name=y] {$\ \ \yo{t}_{ex}$}(out1);
    \draw [->] (split) -- node [name=y] {}(sumNoise);
    \draw [->] (sumNoise) -- node [name=y] {$\ym{t}_{ex}$}(out2);
    \draw [->] (et) -- node [name=y] {}(sumNoise);
\end{scope}
   \begin{pgfonlayer}{background}
   \path (input.west |- system.north)+(+1em,+1em) node (a) {};
   \path (sumNoise.east |- out2.north)+(3.5em,+1.8em) node (c) {};
   \draw[dashed,black!80,fill=black!30]
            (a) rectangle (c);
\end{pgfonlayer}
\end{tikzpicture}
\end{minipage}  
     \caption{System and measurement setup. 
In the measurement setup (grey box) the measured output $\ym{t}_{ex}$ includes the system output $\yo{t}_{ex}$ and the measurement noise $e(t)$.  
Data collected from experiments comprises the input $\ac{t}_{ex}$ and the measured output $\ym{t}_{ex}$ signals.   
 }\label{fig:System} \end{figure} 
\\*
The end objective is to analyse the behaviour of system $\s$. 
We consider properties encoded as specifications $\BLTL$ and expressed in a temporal logic of choice (to be detailed shortly).   
Let us remark that the behaviour of $\s$ to be analysed is bound to a set of operating conditions that are pertinent to the verification problem and that will be indexed with $ver$:  
this comprises the set of possible input signals $\ac{t}_{ver}$ (e.g., a white or coloured noise signal, or a non-deterministic signal $\ac{t}_{ver}\in \A_{ver}\subseteq \A$
 ), and of the set of initial states $\x{0}_{ver}\in\X_{ver}$ for the mathematical models $\M$ reflecting past inputs and/or outputs of the system.
The system satisfies a property if the ``true'' model representing it satisfies it, namely $\s\vDash\BLTL$ if and only if $\M(\pa^0)\vDash\BLTL$.  

In this work we consider the satisfaction of a property $\M(\pa)\vDash\BLTL$ as a \emph{binary-valued mapping} from the parameter space $\Theta$.  
More generally, 
when in addition to the measurements of the system also its transitions are disturbed by stochastic noise, 
then property satisfaction  
is a mapping from the parameter space $\Theta$ to the interval $[0,1]$, 
and quantifies the probability that the model $\M(\pa)$ satisfies the property. 
This mapping generalises the definition of the satisfaction function introduced in \cite{Bortolussi2014}, 
and is now stated as follows. 
\begin{definition}[Satisfaction Function] 
Let $\Mset$ be a set of models $\M$ that is indexed by a parameter $\pa\in \Theta$, 
and let $\psi$ be a formula in a suitable temporal logic. 
The satisfaction function $f_\psi:\Theta\rightarrow [0,1]$ associated with $\psi$ is 
\begin{equation}\label{eq:sat}
f_\psi(\theta) = \mathbf{P}\left( \M(\pa)\vDash \psi\right). 
\end{equation} 
\end{definition}
 Let us assume that the satisfaction function $f_\psi$ is measurable and entails a decidable verification problem (e.g., a model checking procedure) for all $\pa\in \Theta$. 

\smallskip

\begin{problem}\label{prob_stat}
\textit{
For a partly unknown physical system $\s$, 
under prior knowledge on the system given as a parameterised model class $\Mset$ supporting an uncertainty distribution over the parameterisation, 
gather possibly noisy data drawn from the measurement setup and verify properties on $\s$ expressed in a temporal logic of choice, 
with a formal quantification of the confidence of the assertion. 
} 
\end{problem}

\subsection{A Bayesian Framework for Data-driven Modelling and Verification}\label{sec:Bayesian}
 
Consider Problem \ref{prob_stat}. 
Denote loosely with $\p{\cdot}$ and $\pd{\cdot}$ respectively a probability measure and a probability density function, 
both defined over a continuous domain.    
We employ Bayesian probability calculus \cite{Lindley2011} 
to express the confidence in a property as a measure of the uncertainty distribution 
defined 
 the set $\Mset$.  
By adopting the Bayesian framework, 
uncertainty distributions are handled 
as probability distributions of random variables. 
Therefore the confidence in a property is computed as a probability measure $\p{\cdot}$ via the densities $\pd{\cdot}$ over the uncertain variables.

\begin{proposition}[Bayesian Confidence]\label{thm1}
Given a specification $\BLTL$ and a data set $Z^\Nsample$, the confidence that $\s\vDash \BLTL$  can be quantified via inference as
\begin{equation}\label{eq:BayesianConf}
  \textstyle \p{ \s\vDash\BLTL \mid Z^{N_s}} = \int_{\Theta}f_\BLTL(\pa) 
   \pd{\pa|Z^{N_s}}d\pa\ .
\end{equation}
 where $f_\BLTL$ is the satisfaction function given in \eqref{eq:sat}.   
 The \emph{a-posteriori} uncertainty distribution $\pd{\pa|Z^{N_s}}$, given the data set $Z^{N_s}$, is based on parametric inference over $\pa$ as
\begin{equation}\label{BayesianID}
  \textstyle\pd{\pa|Z^{N_s}} = \frac{\pd{Z^{N_s}|\pa}\pd{\pa}}{\int_{\Theta}\pd{Z^{N_s}|\pa}\pd{\pa}d \pa}\ ,
\end{equation} 
which presumes an uncertainty distribution $\pd{\pa}$ over the parameter set $\Theta$, 
representing  the prior knowledge. 
\end{proposition}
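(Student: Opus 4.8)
The plan is to obtain both displayed identities directly from the axioms of Bayesian probability calculus, treating the structured uncertainty over $\pa$ as a genuine random variable with prior density $\pd{\pa}$ on $\Theta$, precisely as set up above.

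First I would establish \eqref{BayesianID}, which is nothing but Bayes' rule for a continuous parameter: factor the joint density $\pd{Z^{N_s},\pa}$ as $\pd{Z^{N_s}\mid\pa}\pd{\pa}$ and divide by the marginal $\pd{Z^{N_s}}=\int_\Theta\pd{Z^{N_s}\mid\pa}\pd{\pa}\,d\pa$. This marginal is finite and strictly positive because the standing well-posedness assumption guarantees that the density $\pd{Z^{N_s}\mid\pa}$ of the measured data is fully characterised for every model in $\Mset$, and $\pd{\pa}$ is a genuine probability density; the likelihood $\pd{Z^{N_s}\mid\pa}$ itself is read off from the measurement model described around Figure~\ref{fig:System}.

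Next I would establish \eqref{eq:BayesianConf} by the law of total probability, marginalising over $\pa$:
\begin{equation*}
\p{\s\vDash\BLTL\mid Z^{N_s}}=\int_\Theta\p{\M(\pa)\vDash\BLTL\mid\pa,Z^{N_s}}\,\pd{\pa\mid Z^{N_s}}\,d\pa,
\end{equation*}
where I have used the standing equivalence $\s\vDash\BLTL\Leftrightarrow\M(\pa^0)\vDash\BLTL$ together with the fact that, conditioned on the parameter taking a given value, the model in question is $\M(\pa)$. It then remains to identify the conditional probability under the integral with the satisfaction function $f_\BLTL(\pa)$ of \eqref{eq:sat}, i.e.\ to argue $\p{\M(\pa)\vDash\BLTL\mid\pa,Z^{N_s}}=\p{\M(\pa)\vDash\BLTL\mid\pa}=f_\BLTL(\pa)$. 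In the deterministic-transition case this is immediate, since $f_\BLTL$ is then $\{0,1\}$-valued and a measurable function of $\pa$ alone; in the stochastic-transition case the probability in \eqref{eq:sat} is taken over a fresh realisation of the process noise under the verification operating conditions (index $ver$), which by construction is independent of the experimental input, of the experimental initial condition, and of the measurement noise $e(t)$ that together generate $Z^{N_s}$ --- so conditioning on $Z^{N_s}$ does not alter it. Substituting $f_\BLTL(\pa)$ and the posterior from \eqref{BayesianID} yields \eqref{eq:BayesianConf}.

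The main obstacle is exactly this last identification: it is the only non-formal step, and it is where the modelling assumptions of this section are actually used --- well-posedness of the experiment, $\pa^0\in\Theta$, the separation between the experimental and the verification operating conditions, and the independence of the respective noise sources. Measurability of $f_\BLTL$, assumed just before Problem~\ref{prob_stat}, is what makes the integral in \eqref{eq:BayesianConf} well-defined, so I would state it explicitly rather than leave it implicit.
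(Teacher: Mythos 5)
Your argument is correct and follows essentially the same route the paper takes: the paper offers no detailed proof, merely noting that the statement "can be formally derived based on standard Bayesian calculus," and your derivation (Bayes' rule for \eqref{BayesianID}, marginalisation over $\pa$ plus the identification of the conditional satisfaction probability with $f_\BLTL(\pa)$ for \eqref{eq:BayesianConf}) is exactly that standard calculus spelled out. Your explicit attention to the independence of the verification-time randomness from the data $Z^{N_s}$ and to the measurability of $f_\BLTL$ makes the one-line appeal rigorous, but does not constitute a different approach.
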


The statement can be formally derived based on standard Bayesian calculus, as in \cite{Lindley2011}.  
 We have chosen to employ a Bayesian framework, as per \eqref{BayesianID}, 
since it allows to reason explicitly over the uncertain knowledge on the system and to work with the data acquired from the measurement setup.   
This leads to the efficient incorporation of the available knowledge and to its combination with the data acquisition procedure, 
in order to compute the confidence on the validity of a given specification over the underlying system.    
As a special instance, 
this result can be employed for Bayesian hypothesis testing \cite{Zuliani2013a}.   
 As long as the mapping $f_\BLTL$ is measurable, 
the models in the model set (and hence the system represented by it) can be characterised by either probabilistic or non-probabilistic dynamics. 
\begin{rem}
 In statistical model checking \cite{Legay,sen2004statistical}, 
the objective is to replace the computationally tolling verification of a system over bounded-time properties by the empirical (statistical) testing of the relevant specifications over finite executions drawn from the system.    
In contrast, our problem statement tackles the problem of efficiently incorporating data with prior knowledge, for the formal (deductive) verification of the behaviour of a system with partly unknown dynamics -- as such our overall verification approach is, as claimed, both data-driven and model-based. 
Moreover, by separating the operational conditions in an experiment from those of importance for the verification procedure, 
the system can be verified over non-deterministic inputs, 
encompassing as such both controller and disturbance inputs, or modelling errors.%
\end{rem}  
\subsection{Computational Approaches}
 The Bayesian approach is widely applicable to different types of properties and models,  
however its computational complexity might in practice limit its implementation.    
 In the literature the satisfaction function is related to the exploration of a parameter set over the validity of a formal property $f_\psi(\pa)$, and has been studied 
for autonomous models in continuous time in  
\cite{Batt2007,Frehse2008,Henzinger1996}. Analytical solutions to the parametric inference equation \eqref{BayesianID}  can be found if the prior is a conjugate distribution. For linear dynamical systems, closed-form solutions are given inter alia in \cite{Peterka1981a}.  
In general \eqref{eq:BayesianConf}-\eqref{BayesianID} in Proposition \ref{thm1} lack analytical solutions,  
and the assessment of the satisfaction function \eqref{eq:sat} may be computationally intensive. 
Statistical methods such as the one proposed in \cite{Gyori2014} on a similar Bayesian approach lead to involved computations and introduce additional uncertainty from Monte Carlo techniques. 

On the contrary, 
in the next section, we propose a novel computational approach over discrete-time linear time-invariant systems. 
By exploiting linear parameterisations analytical solutions of both the parametric inference and the satisfaction function are characterised for properties expressed within a fragment of a temporal logic.   

\section{LTL Verification of LTI systems}\label{sec:3} 

Consider a system $\s$ that can be represented by 
a class of finite-dimensional dynamical models that evolve in discrete-time, 
and are linear, time-invariant (LTI), and not probabilistic.   
These models depend on input and output signals ranging over $\mathbb{R}^m$ and $\mathbb R^p$, respectively, 
and on variables $\mathbf{x}_\s(t)$ taking values in 
an Euclidean space, $\mathbf{x}_\s(t)\in\X\subseteq\mathbb{R}^n$, where $n$, the state dimension, is the model order.  
The behaviour of such a system is encompassed by state-space models $(A_\s,B_\s,C_\s,D_\s)$ as  
\begin{equation} 
\s:\quad \left\{ \begin{array}{ll} \mathbf{x}_\s(t+1)&=A_\s\mathbf{x}_\s(t)+B_\s u(t),\\\yo{t}&=C_\s\mathbf{x}_\s(t)+D_\s u(t),\end{array}\right.
\end{equation}
 \noindent%
where matrices $A_\s, B_\s, C_\s, D_\s$ are of appropriate dimensions.   
Let us remark that LTI systems represent the most common modelling framework in control theory,   
a key framework leading towards generalisations to more complicated (e.g., nonlinear) dynamical models.  
The experimental measurement setup, 
as depicted in Figure \ref{fig:System}, 
consists of the signals $\ac{t}_{ex}$ and $\ym{t}_{ex}=\yo{t}_{ex}+e(t)$, 
representing the inputs and the measured outputs, respectively,  
and where $e(t)$ is an additive zero-mean, white, Gaussian-distributed measurement noise with covariance $\Sigma_e$ that is uncorrelated from the inputs.  
$\Nsample$ samples are collected within a data set $Z^{\Nsample}=\{\ac{t}_{ex},\ym{t}_{ex}\}_{t=1}^\Nsample$.

System properties are expressed, over 
a finite set of atomic propositions $p_i\in AP$, $i=1,\ldots,|AP|$, in 
Linear-time Temporal Logic \cite{Baier2008}. 
LTL formulae are built recursively via the syntax \(\psi::= \operatorname{true}\mid p \mid \neg \psi\mid \psi\wedge \psi \mid \psi\vee \psi  \mid \nex \psi \mid \psi \until \psi.\)
Let $\pi=\pi(0),\pi(1),\pi(2),\ldots \in\alphabeth^{\mathbb{N}^+}$ be a string composed of letters from the alphabet $\alphabeth = 2^{AP}$, 
and let $\pi_t=\pi(t),\pi(t+1),\pi(t+2),\ldots  $ be a subsequence of $\pi$, 
then the satisfaction relation between $\pi$ and $\psi$ is denoted as \(\pi\vDash\psi\)  
(or equivalently \(\pi_0\vDash\psi\)). 
The semantics  for the satisfaction are defined recursively over $\pi_t$ and the LTL syntax as 
\begin{equation*}
\begin{aligned}
\mbox{ {\scriptsize (true)}}\  &\pi_t\vDash \operatorname{true} & \Leftrightarrow &\operatorname{true}\\
 \mbox{ {\scriptsize (atomic prop.)}} \ &\pi_t\vDash p& \Leftrightarrow & \ p\in\pi(t)\\
 \mbox{ {\scriptsize (negation)}}\ &\pi_t\vDash \neg \psi & \Leftrightarrow &\ \pi_t\not\vDash  \psi\\
 \mbox{ {\scriptsize (conjunction)}}\ &\pi_t\vDash \BLTL_1\wedge\BLTL_2
 & \Leftrightarrow & \  \pi_t\vDash \BLTL_1 \mbox{ and } \pi_t\vDash \BLTL_2 \\
 \mbox{ {\scriptsize (disjunction)}}\ &\pi_t\vDash \BLTL_1\vee\BLTL_2
 & \Leftrightarrow & \  \pi_t\vDash \BLTL_1 \mbox{ or } \pi_t\vDash \BLTL_2 \\
 \mbox{ {\scriptsize (next)}} \ &\pi_t\vDash\nex\BLTL    & \Leftrightarrow & \  \pi_{t+1}\vDash \BLTL\\
 \mbox{ {\scriptsize (until)}} \  & \pi_t\vDash \BLTL_1\until\BLTL_2
 & \Leftrightarrow & \  \exists\, i\in\mathbb{N}:  \pi_{t+i}\vDash \BLTL_2,\\ &&&\qquad
\mbox{and } 
\forall j \in\mathbb{N}:\\ &&&\qquad 0\leq j<i, \pi_{t+j}\vDash \BLTL_1
\end{aligned}
\end{equation*}
Denote the $k$-bounded and unbounded invariance operator as $\always^k\psi=\bigwedge_{i=0}^{k}\nex^i \psi$ and $\always\psi=\neg(\texttt{true}\until\neg \psi)$, respectively.

Of interest are formal properties encoded on the input-output behaviour of the system, 
and over a time horizon $t\geq0$.  
The output $\yo{t}_{ver}\in\Y$ is labeled by a map $L:\Y\rightarrow\alphabeth$,  
which assigns {letters $\letter$} in the alphabet $\alphabeth$ {via} half spaces on the output, as  
\begin{equation}\label{eq:labelling}\textstyle
  L(y_0(t)_{ver})=\letter\in\alphabeth\ \textstyle\Leftrightarrow \ \bigwedge_{p_i\in \letter}A_{p_i} \yo{t}_{ver}\leq b_{p_i},
\end{equation}  
for given $A_{p_i}\in \mathbb{R}^{1\times p} ,\ b_{p_i}\in \mathbb{R}$ that is, 
sets of atomic propositions are associated to polyhedra over $\Y\subset \mathbb{R}^p$. 
Let us underline that properties are defined over the behaviour $\yo{t}_{ver}$ of the system, and not over the noisy measurements $\ym{t}_{ex}$ of the system in the measurement setup. 
Additionally, for the verification problem the input signal is modelled as a bounded signal $\ac{t}\in\A_{ver}$, and represents possible external non-determinism of the environment acting on the system.   
 
 \subsection{Model Set Selection}
As a first step we need to embed the a-priori available knowledge on the underlying system within a parameterised model set, 
under a prior distribution.  
The use of linearly parameterised model sets defined through orthonormal basis functions to represent partially unknown systems is a broadly used framework in system identification:  
it allows for the incorporation of prior knowledge, while maintaining the benefits (computational aspects) of linear parameterisations. 
Practically, it has been widely used for the modelling of physical systems, such as the thermal dynamics of buildings \cite{Virk1994,Reginato2009}.    
Note that although the goal of parameter exploration in formal verification has recently attracted quite some attention \cite{Batt2007,Frehse2008,Henzinger1996}, 
there are as of yet no general scalable results 
for the computation of the satisfaction function for nonlinearly-parameterised discrete-time LTI models. 
 Whilst in general linear time-invariant models with uncertain parameters
do not map onto a linearly-parameterised model set, 
we argue that a linearly-parameterised model set can encompass a relevant class of models. 
For instance, 
any asymptotically stable LTI model can be represented uniquely by its (infinite) impulse response \cite{Heuberger1995},   
and the coefficients of the impulse response define a linear parameterisation for this model. 
Further, 
the coefficients of the impulse response converge to zero, 
so that a truncated set of impulse coefficients can provide a good approximate model set with a finite-dimensional, linear parameterisation. 
This 
is only one possible instance of modelling by a finite set of orthonormal basis functions \cite[Chapters 4 and 7]{heuberger2005modelling},\cite{van1995system}, 
which can be selected to optimally incorporate prior knowledge:  
we conclude that, as an alternative to the use of a nonlinearly parameterised set of models, structural information (even when inexact) can be used to select a set of orthonormal basis functions, whose finite truncation defines a finite-dimensional linearly-parameterised model set indexed over the coefficients of the basis functions. 
Thus, in the following we consider a linearly parameterised model set $\Mset$ that encapsulates system $\s$, 
and specifically $\Mset=\{(A,B,C(\pa),D(\pa)),\pa\in\Theta\}$. 
 
 A system, or equivalently the mathematical model that represents it, 
satisfies a property if all the words generated by the model satisfy that property. 
Since properties are encoded over the external (input-output) behaviour of the 
system $\s$, 
which is the behaviour of $\M(\pa^0)$, $\pa^0\in\Theta$, 
we can equivalently assert that any property $\BLTL$ is verified by the system, $\s\vDash\BLTL$, if and only if it is verified by the unknown model representing the system, namely $\M(\pa^0)\vDash\BLTL$.
Introduce $\Theta_\BLTL$  to be the feasible set of parameters, 
such that for every parameter in that set the property $\BLTL$ holds, i.e., $\forall \pa \in \Theta_\BLTL:\M(\pa)\vDash\BLTL$. 
As such $ \Theta_\BLTL$ is characterised as the level set of the satisfaction function $f_\BLTL$, $\Theta_\BLTL=\{\pa\in \Theta:f_\BLTL(\pa)=1\}$. 

\subsection{ Safety Verification of Bounded-time Properties} \label{sec:LinSys}
 
Models $\M$ in the class $\Mset$ have the following representation $(A,B,C(\pa),0)$: 
\begin{align}%
\label{eq:model}
&\M(\pa):\quad \left\{\begin{array}{ll}
  \x{t+1}&=A\x{t}+Bu(t),\\
  \y{t,\pa}&=C(\pa)\x{t},
\end{array}\right.\end{align}
and are parameterised by $\theta\in \Theta \subset\Real^{pn}:$\(\pa=\operatorname{vec}(C) \)
with a prior probability distribution $\pd{\pa}$.
In addition to this strictly proper model class we will 
also allow for proper model $(A,B,C(\pa),D(\pa))$ where both the $C$ and  the $D$-matrices are parameterised and the parameterisation is $\pa=\operatorname{vec}([C \ D]))$.
For a given initial condition $\x{0}$ and input sequence, the output of the ``true" model $\y{t,\parTrue}$ is equal to the system output $\yo{t}$.
 
Given a measurement set-up as in Figure \ref{fig:System} with unknown parameter $\parTrue$. Then
$\ac{t}_{ex}$ and $\ym{t}_{ex}$ represent the input and the measured output, respectively, 
and $e(t)$ is an additive zero-mean, white, Gaussian-distributed measurement noise with covariance $\Sigma_e$ that is uncorrelated from the input. 
Furthermore $\ac{t}$ is assumed to be uncorrelated with the noise $e(t)$.
From this set-up 
$\Nsample$ samples are collected in a data set $Z^{\Nsample}=\{\ac{t}_{ex},\ym{t}_{ex}\}_{t=1}^\Nsample$. 

Therefore given the operating conditions of the experiment set-up the measured signal $\ym{t}_{ex}$ can be fully characterised: 
its probability density,  
conditional on the parameters $\pa$, is  
\begin{align*}%
&  \pd{Z^{N_s}|\pa} = \prod_{t=1}^{N_s} \pd{\ym{t}_{ex}|\pa}\\& = \frac{1}{ \sqrt{|\Sigma_e|^{N_s} (2\pi)^{pN_s}}} \exp\bigg[ \\& 
-\frac{1}{2} \sum_{t=1}^{N_s}(\y{t,\pa}-\ym{t}_{ex})^T \Sigma_e^{-1} (\y{t,\pa}-\ym{t}_{ex}) \bigg]
 \end{align*} 
and can be directly used in Proposition \ref{thm1}. 
This conditional density $ \pd{Z^{N_s}|\pa}$ depends implicitly on the given initial state $\x{0}_{ex}$ and, 
for the case of a given uncertainty distribution for $\x{0}_{ex}$, 
$ \pd{Z^{N_s}|\pa}$ should be marginalised as a latent variable \cite{Peterka1981a}. 
The a-posteriori uncertainty distribution is obtained as the analytical solution of the parametric inference in \eqref{BayesianID} \cite{Peterka1981a}.

Recall now that for a given specification $\BLTL$, 
we seek to determine a feasible set of parameters $\Theta_{\BLTL}$, 
such that the corresponding models admit property $\BLTL$, 
namely $\M(\pa)\vDash \BLTL, \ \forall \pa \in \Theta_{\BLTL}$.
Since models $\M(\pa)$ have a linearly-parameterised state space realisation as per \eqref{eq:model},  
it follows that when the set of initial states and inputs $\X_{ver}$ and $\mathbb U_{ver}$ are bounded polyhedra, 
the verification of a class of safety properties expressed by formulae with labels as in \eqref{eq:labelling} 
leads to a set of feasible parameters $\Theta_{\BLTL}$ that is a polyhedron, 
which can be easily computed.   
More precisely, 
the following theorem can be derived.

\begin{thm}[\cite{ACCSofie}]\label{Thm1}
Given a  bounded  polyhedral set (or equivalently a polytope) 
of initial states $\x{0}\in \X_{ver}$ and of inputs $u(t) \in \mathbb{U}_{ver}$ for $ t\geq 0$, 
and considering a labelling map as in \eqref{eq:labelling}, then
the feasible set $\Theta_\BLTL$ of the parameterised model set \eqref{eq:model} results in a polyhedron for properties $\psi$ composed of the LTL fragment  $\psi::=\letter|\nex\psi|\psi_1\wedge\psi_2$, with $\letter\in \alphabeth$.
\end{thm}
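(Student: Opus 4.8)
The plan is to proceed by induction on the structure of the LTL fragment $\psi::=\letter\mid\nex\psi\mid\psi_1\wedge\psi_2$, tracking not just that $\Theta_\psi$ is a polyhedron but the stronger invariant that it is obtained by intersecting finitely many half-spaces in $\pa$, each linear in $\pa$. First I would fix notation for the reachable outputs: since \eqref{eq:model} is linear and time-invariant, for every $t\geq 0$ the output is $\y{t,\pa}=C(\pa)A^t\x{0}+\sum_{k=0}^{t-1}C(\pa)A^{t-1-k}u(k)$, which is \emph{bilinear} in $(\pa,\x{0},u(0),\dots,u(t-1))$ and, crucially, \emph{linear in $\pa$} for each fixed choice of initial state and input sequence. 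Because $\X_{ver}$ and $\mathbb{U}_{ver}$ are polytopes, I would express each of them as the convex hull of its (finitely many) vertices; linearity of $\y{t,\pa}$ in $(\x{0},u(\cdot))$ then means that a half-space constraint $A_{p_i}\y{t,\pa}\leq b_{p_i}$ holds for \emph{all} admissible initial states and inputs if and only if it holds at every vertex combination, giving a finite conjunction of constraints each \emph{linear in $\pa$}.

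Next I would handle the base case $\psi=\letter$. By the semantics and the labelling map \eqref{eq:labelling}, $\M(\pa)\vDash\letter$ means $L(\y{0,\pa}_{ver})$ contains every $p_i\in\letter$, i.e.\ $\bigwedge_{p_i\in\letter}A_{p_i}\y{0,\pa}\leq b_{p_i}$, and this must hold for every $\x{0}\in\X_{ver}$ (at time $0$ there is no input dependence yet, or one simply includes $u(0)$ if $D\neq 0$ in the proper case). By the vertex argument above this is a finite set of half-spaces linear in $\pa$, hence $\Theta_\letter$ is a polyhedron. For the inductive step on $\nex$: $\M(\pa)\vDash\nex\psi$ iff the one-step-shifted behaviour satisfies $\psi$; since the shifted system is again an LTI model of the same parameterised form with new initial-state set $A\X_{ver}+B\mathbb{U}_{ver}$ (still a polytope) and the same input set, the induction hypothesis applies and $\Theta_{\nex\psi}$ is a polyhedron. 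For $\wedge$, $\Theta_{\psi_1\wedge\psi_2}=\Theta_{\psi_1}\cap\Theta_{\psi_2}$, an intersection of two polyhedra, hence a polyhedron. Since the fragment only admits these three constructors, the induction terminates after finitely many steps (bounded by the nesting depth of $\nex$, which caps the time horizon that matters), and $\Theta_\psi$ is a finite intersection of half-spaces, i.e.\ a polyhedron; the constructive nature of each step gives the ``easily computed'' claim.

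The main obstacle I anticipate is the universal quantification over the continuum of initial states and inputs in $\X_{ver}$ and $\mathbb{U}_{ver}$: one must argue cleanly that ``the half-space constraint holds for all $(\x{0},u(\cdot))$ in a polytope'' reduces to a \emph{finite} set of linear-in-$\pa$ constraints. This is where the bilinearity is delicate --- the map $(\pa,\x{0},u(\cdot))\mapsto A_{p_i}\y{t,\pa}$ is not jointly linear, so one cannot naively quantify over vertices in all variables at once. The correct move is to keep $\pa$ as a free parameter and use linearity in $(\x{0},u(\cdot))$ alone: for fixed $\pa$, $\x{0}\mapsto A_{p_i}(C(\pa)A^t\x{0}+\dots)$ is an affine function on the polytope $\X_{ver}$ (and similarly affine in each $u(k)\in\mathbb{U}_{ver}$), so its supremum over the polytope is attained at a vertex, and requiring the constraint at each vertex combination yields finitely many inequalities that are each linear in $\pa$. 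One should also be slightly careful that in the proper case $D(\pa)\neq 0$ the dependence on $u(0)$ remains linear in $\pa$, which it does since $\pa=\operatorname{vec}([C\ D])$. Once this reduction is in hand the rest is bookkeeping over the (shallow) formula tree.
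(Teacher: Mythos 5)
Your proof is correct, and it rests on exactly the same two pillars as the paper's: the output is linear in $\pa$ for fixed state/input and affine in $(\x{0},u(\cdot))$ for fixed $\pa$, so universal quantification over the polytopes $\X_{ver}$, $\mathbb{U}_{ver}$ reduces to their finitely many vertices, each vertex yielding a half-space in $\pa$; conjunction is handled by intersection, and the proper case $D(\pa)\neq 0$ by the joint linear parameterisation $\pa=\operatorname{vec}([C\ D])$. Where you differ is in the bookkeeping of the induction: you parameterise the claim by the initial-state polytope and push it forward through $\nex$ via the one-step reachable set $A\mathcal{S}+B\mathbb{U}_{ver}$ (a Minkowski sum, still a polytope), eliminating all quantifiers only at the leaves (letters), whereas the paper keeps the state symbolic throughout, inductively building explicit matrices $(N_\psi,K_\psi,B_\psi)$ so that satisfaction from state $\x{t}$ is the multi-affine inequality \eqref{eq:mafine}, eliminating the input quantifier at every $\nex$ step via the vertices of $U$ and the initial-state quantifier only at the very end via the vertices of $X_{ver}$ (with $D$ handled by state augmentation, $X_{ver}\times U$). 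Your forward-propagation view is essentially the $\operatorname{Post}$/reach-set perspective the paper only adopts later for unbounded-time invariance, and it gives a cleaner conceptual induction; the paper's explicit constraint-matrix construction is heavier but pays off downstream, since the representation \eqref{eq:mafine} is reused verbatim in Lemma \ref{lem:decreasing} (monotonicity of $\Theta_\psi(\cdot)$) and in the perturbation bound of Lemma \ref{lem:robust}. One small caveat worth making explicit in your write-up: the inductive hypothesis must be stated for an arbitrary polytopic initial set (not just $\X_{ver}$), since the $\nex$ step changes the set; you use this implicitly and it is unproblematic because universal quantification distributes over conjunction, so no correlation issues arise between sibling subformulae.
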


\begin{proof}[{of Theorem \ref{Thm1}}]
Let $\otimes$ denote the Kronecker product. 
Consider  the input set $\A_{ver}$ to be the convex hull of $U$, i.e. $\textmd{conv}(U)=\A_{ver}$. Similarly let the set of initial states be $\textmd{conv}(X_{ver})=\X_{ver}$. 
Let the model set be given as $\M(\pa)=(A,B,C(\pa),D)$.
We will temporarily assume that $D$ is set equal to zero. Afterwards we will show how to work with a parameterised  $D$.  Note that the syntax fragment $\psi::=\letter |\nex\psi|\psi_1\wedge\psi_2$ with $\letter\in \alphabeth=2^{AP}$ is equivalent to $\psi::=p |\nex\psi|\psi_1\wedge\psi_2$ with $p\in {AP}$.
 
\smallskip 
 
\textbf{1. } We claim that for every specification $\psi$ composed from the syntax fragment $\psi::=p |\nex\psi|\psi_1\wedge\psi_2$ and $\pa\in\Theta$, the words generated by a model $\M(\pa)=(A,B,C(\pa),0)$ with state $\x{t}$ satisfy the specification $\BLTL$, denoted $ <\M(\pa),\x{t}>\vDash \psi$, if and only if 
 \begin{equation}\label{eq:mafine} \left(\left( I_{n_\psi} \otimes\x{t} \right)^T\!\!N_\psi+K_\psi\right)  \pa  \leq B_\psi.\end{equation}
The matrices 
$N_\psi\in\mathbb{R}^{nn_\psi\times np},\allowbreak 
  K_\psi \in \mathbb{R}^{n_\psi\times np},\allowbreak  
  B_\psi \in \mathbb{R}^{n_\psi}$ in the above satisfaction relation 
have dimensions that are functions of the parametrisation and of the property dependent ``dimension'' $n_\psi$, 
and are obtained inductively over the syntax of the specification.  \\*
For any \emph{atomic propositions} 
 the model starting from state $\x{t}$ satisfies a property $\ap$, i.e.,  
 \( <\M(\pa),\x{t}>\vDash \ap \Leftrightarrow  A_{\ap} y\leq b_{\ap}\), 
 with $A_{\ap}\in \mathbb{R}^{1\times p}$ and $b_{\ap}\in \mathbb R$ we construct the matrices $N_{\ap}$, $K_{\ap}$ and $B_{\ap}$
as follows.
Consider $y(t)$ for a given $x(t)$ then \begin{align*}
A_\ap y(t)
&=A_\ap C(\pa) \x{t} 
=  \x{t}^T (  I_n\otimes A_\ap)\pa.
\end{align*} 
This yields
\(
N_{{\ap}}= (  I_n\otimes A_\ap)\,
\in\Real^{ n\times np},\;\allowbreak
 K_{{\ap}}=O_{1\times np}
\in\Real^{1\times np},\; \allowbreak
\allowbreak \mbox { and }  \allowbreak
  B_{{\ap}}= b_{\ap}
 \in\Real^{1\times 1}.\)
\\*
The \emph{next} operation  $\nex\psi_1$  with matrices ($N_{\psi_1}$,$K_{\psi_1}$,\allowbreak$D_{{\psi_1}}$,$b_{{\psi_1}}$) yields matrices
\begin{align*}
  N_{\nex {\psi_1}}&= \mathbf{1}_{|U|}\otimes\left(I_{n_{\psi_1}}\otimes A^T\right) N_{\psi_1} ,\\
  K_{\nex {\psi_1}}&=\mathcal U\left(I_{n_{\psi_1}} \otimes B \right)^TN_{\psi_1} + \mathbf{1}_{|U|} \otimes K_{\psi_1} ,\\ 
  B_{\nex {\psi_1}}&= \mathbf{1}_{|U|}\otimes B_{\psi_1} , 
 \shortintertext{where the $i$-th set of $n_{\psi_1}$ rows of $ \mathcal U\in\mathbb{R}^{|U|n_{\psi_1}\times m}$ is defined as }
&\left(I_{n_{\psi_1}}\otimes u_i^T\right)
\mbox{with $u_i\in U$}
\end{align*}
and where $n_{\nex\psi_1}=|U|n_{\psi_1}$.
This can be derived as
 \begin{align*} 
&<\M(\pa),\x{t}>\vDash  \nex \psi\Leftrightarrow\forall \ac{t}\in \mathbb{U}_{ver} :\\&\ \ \quad \left(\left(I_{n_{\psi_1}} \otimes\x{t+1}\right)^TN_{\psi_1}+K_{\psi_1}\right)\pa  \leq B_{\psi_1} ,\\ 
&\Leftrightarrow \forall \ac{t}\in \mathbb{U}_{ver}: \\& \left( \left( I_{n_{\psi_1}} \otimes A\x{t}\right)^T  N_{\psi_1} \right. \\
&\;\left.
\quad
+\left( I_{n_{\psi_1}} \otimes  Bu(t)\right)^TN_{\psi_1}+K_{\psi_1}\right) \pa \leq B_{\psi_1} .\end{align*}
Since
the above is an affine function in $u(t)$, the image of every $u(t)\in \textmd{conv}(U)=\mathbb{U}_{ver}$ can be expressed as a convex combination of the values at the vertices $u_i\in U$, c.f. \cite{belta2002control}.
Then an equivalent expression is 
\begin{align*} 
&\Leftrightarrow \forall u_i\in U: \Big( \left( I_{n_{\psi_1}} \otimes A\x{t}\right)^T  N_{\psi_1}   \qquad\qquad\\&  
+ \left(I_{n_{\psi_1}}\otimes u_i\right)^T\left(I_{n_{\psi_1}} \otimes B \right)^TN_{\psi_1}+  K_{\psi_1}\Big)\pa  \leq   B_{\psi_1} 
\shortintertext{which can be rewritten as}
&\Leftrightarrow  \Big(\mathbf{1}_{|U|}\otimes \left( I_{n_{\psi_1}} \otimes A\x{t}\right)^T  N_{\psi_1} 
+ \mathcal U\left(I_{n_{\psi_1}} \otimes B \right)^TN_{\psi_1}  \qquad\qquad\\& \hspace{1cm} + \mathbf{1}_{|U|} \otimes K_{\psi_1}  \Big)\pa  \leq \mathbf{1}_{|U|}\otimes  B_{\psi_1} .
\end{align*}
Having obtained $K_{\nex \psi}$, $D_{\nex \psi}$, and  $b_{ \nex \psi}$, now rewrite the first term to obtain $N_{\nex \psi}:$
\begin{flalign*}
  &\mathbf{1}_{|U|}\otimes\left(I_{n_{\psi_1}}\otimes \xT(t)\right)\left(I_{n_{\psi_1}}\otimes A^T\right) N_{\psi_1} \\
  &=\left(I_{|U|}\mathbf{1}_{|U|}\right)\otimes\left(I_{n_{\psi_1}}\otimes \xT(t)\right)\left(I_{n_{\psi_1}}\otimes A^T\right) N_{\psi_1} \\
&=\left(I_{|U|n_{\psi_1}} \otimes \xT(t)\right)\left(\mathbf{1}_{|U|}\otimes\left(I_{n_{\psi_1}}\otimes A^T\right) N_{\psi_1}\right).
\end{flalign*}
\\*
 The \emph{and} operation $\psi_1\wedge\psi_2$
for ($N_{\psi_1}$, $K_{\psi_1}$,$D_{{\psi_1}}$,$b_{{\psi_1}}$) and  ($N_{\psi_2}$, $K_{\psi_2}$,$D_{{\psi_2}}$,$b_{{\psi_2}}$)  with $n_{\psi_1\wedge \psi_2}=(n_{\psi_1}+n_{\psi_2})$ gives
\begin{align*}N_{\psi_1\wedge\psi_2}=\begin{bmatrix}
 N_{\psi_1}\\ N_{\psi_2}
\end{bmatrix}\!\!, 
  K_{\psi_1\wedge\psi_2}=\begin{bmatrix}
  K_{\psi_1}\\K_{\psi_2}
\end{bmatrix}\!\!,  
\;\,  B_{\psi_1\wedge\psi_2}=\begin{bmatrix}B_{\psi_1}\\B_{\psi_2}\end{bmatrix}\!\!.
\end{align*}
This can be derived from
   \begin{align*}
&<\M(\pa),\x{t}>\vDash\psi_1\wedge\psi_2\\
& \Leftrightarrow\bigwedge_{i\in \{1,2\}} \left(\left(I_{n_{\psi_i}} \otimes \x{t}\right)^TN_{\psi_i}+K_{\psi_i}  \right)\pa \leq B_{\psi_i}   \\
&\Leftrightarrow \!\! \left(\!\!\left(I_{n_{\psi_1\wedge \psi_2} }\otimes \x{t}\right)^T\!\!\begin{bmatrix}
  N_{\psi_1}\\ N_{\psi_2}
\end{bmatrix}\!\!+\!\begin{bmatrix}
  K_{\psi_1}\\K_{\psi_2}
\end{bmatrix}\right)  \pa \leq\!\! \begin{bmatrix}B_{\psi_1}\\B_{\psi_2}\end{bmatrix}
 .\end{align*}

\smallskip
 
\textbf{2. } 
The matrix-valued function 
\[\left(\left(I_{n_{\psi}} \otimes \x{0}\right)^TN_\psi+K_\psi \right) \pa \]
is affine in $ \xT(0)$ (for a fixed $\pa$), 
therefore its value at the initial condition $\x{0}\in\X_{ver}$ is a convex combination of the function values at the vertices $X_{ver}$ of $\X_{ver}$. 
Thus the satisfaction relation $ <\M(\pa), \x{0}>\vDash \psi$ represented by the multi-affine inequality holds uniformly over $\x{0}\in\X_{ver}$ if and only if it holds for the vertices of $\X_{ver}$. \\*
This gives a set of affine inequalities in $\pa$, 
thus the feasible set $\Theta_\psi$  is a polyhedron and is given as \[\left\{\pa\in\Theta:\bigwedge_{ \mathbf{x}_i\in X_{ver}} \!\!\!\!\!\left(\left(I_{n_{\psi}}\otimes \mathbf x_i\right)^TN_\psi+K_\psi \right)\pa   \leq B_\psi \right\}.\]
 The set $\Theta_\psi$ is a polyhedron, since it is formed by a finite set of half spaces.

\textbf{3.} To prove Theorem \ref{Thm1} we need to extend the results to  models with parameterised $D$.
The dynamics of model $(A,B,C,D)$ with both $C$ and $D$ fully parameterised can be reformulated as
\begin{align*}
\begin{bmatrix}\x{t+1}\\u(t+1)\end{bmatrix}&=\begin{bmatrix}A&B\\0&0\end{bmatrix}\begin{bmatrix}\x{t}\\u(t)\end{bmatrix} + \begin{bmatrix}0\\I\end{bmatrix}u(t+1)\\
y(t)&=\begin{bmatrix}C &D\end{bmatrix} \x{t}.
\end{align*}
Using the new matrices $(\tilde A,\tilde B, \tilde C(\pa),0) $ the obtained results still hold. For part \textbf{2.} set of vertices $X_{ver}$ needs to be extended with the vertices of $U$ as $X_{ver}\times U$. 
 \end{proof}

In the computation of the feasible set, the faces of the polyhedron $\Theta_\BLTL$ are shown to be a function of the 
vertices\footnote{A polytope can be written as the convex hull of a finite set of \emph{vertices}.}
 of the bounded set of initial states $\X_{ver}$ and of the set of inputs $\A_{ver}$, and are also expected to grow in number as a function of the time horizon of the property. \\ 
The result in Theorem \ref{Thm1} is valid for any finite composition of the LTL fragment $\psi::=\letter|\nex\psi|\psi_1\wedge\psi_2$, as such it only holds for finite horizon properties. Properties defined over the infinite horizon will be the objective of Section \ref{sec:ubp}. 
 
\subsection{Case Study: Bounded-Time Safety Verification
 }\label{ex:case}  
   
Consider a system $\s$ and verify whether the output $\yo{t}_{ver}$ remains within the interval $\mathcal I = \begin{bmatrix} 
  -0.5,\ 0.5\end{bmatrix}$, labeled as $\iota$, 
  for the next 5 time steps, under  $u(t)_{ver} \in\mathbb{U}_{ver}=[-0.2,\ 0.2]$ and $\x{0}_{ver}\in\{0_2\}=\X_{ver}$. 
  Introduce accordingly the alphabet $\alphabeth = \{\iota, \tau\}$ and the labelling map $L: L(y) = \iota, \forall y\in\mathcal I$, $L(y) = \tau, \forall y\in \Y \setminus \mathcal I$.
 Now check whether the following LTL property holds: 
$\s\vDash \bigwedge _{i=1}^{5} (\nex)^i \iota$.  
 We assume that system $\s$ can be represented as an element of a model set $\Mset$ with transfer functions characterised by second-order Laguerre-basis ones \cite{Heuberger1995} (a special case of orthonormal basis functions), which translates to the following parameterised state-space representation: 
\begin{align}%
 \label{eq:laguerre}
\!\!\!\! \!\!
\begin{array}{ll}
  \x{t+1}&\!\!\!\!  = \begin{bmatrix}
    a&0\\1-a^2&a
  \end{bmatrix} \x{t}+\begin{bmatrix}
   \cramped{ \sqrt{1-a^2}}\\(-a)\cramped{\sqrt{1-a^2}}
  \end{bmatrix}u(t), \\
  \y{t,\theta}&\!\!\!\! = \pa^T\x{t}\;. 
\end{array}
\end{align}
The parameter set is chosen as $\pa\in \Theta=[-10,10]^2$, whereas the coefficient $a$ is chosen to be equal to $0.4$.  
We select, as prior available knowledge on the system, 
a uniform distribution $\pd{\pa}$ on the model class, 
and pick a known variance $\var_e^2=0.5$ for the white additive noise on the measurement.   
The set of feasible parameters $\Theta_\BLTL\subset\Theta$ is represented in Figure \ref{fig:laguerre1} and is computed according to Theorem \ref{Thm1}. 
Based on the prior available knowledge, the confidence associated to $\pa_0\in \Theta_\BLTL$ amounts to $0.0165$\footnote{This is obtained by numerical computation of \eqref{eq:BayesianConf} with probability distribution $\pd{\pa}$. ntegrals are solved via the numerical integration tool in \texttt{Matlab}.}. 
In comparison to this value, after doing an experiment on the system with ``true parameter'' $\pa_0=[1\ 0]^T$ (Figure \ref{fig:laguerre1}) and with input signal $\ac{t}_{ex}$, a realisation of a white noise with a uniform distribution over $[-0.2,0.2]$, and measuring $\ym{t}_{ex}$ for 200 consecutive time instances 
the uncertainty distribution is refined as $\pd{\pa|Z^{N_s}}$. The resulting confidence  \eqref{eq:BayesianConf} in the property is increased to $0.779$. \\*
Along this line of experiments, 
we have repeated the test 100 times, 
for several instances of the parameter $\pa^0$ characterising the underlying system $\s$.  
In all instances, after obtaining 200 measurements the a-posteriori confidence represents the confidence in the safety of the system, 
as displayed in Table \ref{Table:Conf} via mean and variance terms.
 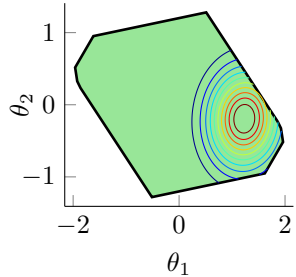
\begin{SCfigure}[1][h]
\begin{minipage}[b]{.4\columnwidth} 
  \centering \newlength\figureheight
    \newlength\figurewidth
    \setlength\figureheight{.8\linewidth}
    \setlength\figurewidth{.9\columnwidth}
   \hskip-1cm 
\definecolor{mycolor1}{rgb}{0.19608,0.80392,0.19608}%
\definecolor{mycolor2}{rgb}{0.00000,0.00000,0.56250}%
\definecolor{mycolor3}{rgb}{0.00000,0.43750,1.00000}%
\definecolor{mycolor4}{rgb}{0.00000,0.87500,1.00000}%
\definecolor{mycolor5}{rgb}{0.31250,1.00000,0.68750}%
\definecolor{mycolor6}{rgb}{0.75000,1.00000,0.25000}%
\definecolor{mycolor7}{rgb}{1.00000,0.81250,0.00000}%
\definecolor{mycolor8}{rgb}{0.93750,0.00000,0.00000}%
\begin{tikzpicture}

\begin{axis}[%
width=\figurewidth,
height=\figureheight,
colormap/jet,ylabel near ticks,
unbounded coords=jump,
view={0}{90},
scale only axis,
xmin=-2.15866290379835,
xmax=2.15866290379835,
xlabel={$\theta_1$},
ymin=-1.40852485670389,
ymax=1.40852485670389,
ylabel={$\theta_2$},
zmin=-1,
zmax=1,
axis x line*=bottom,
axis y line*=left,
axis z line*=left,
 y label style={at={(-0.1,0.5)}}
]

\addplot3[area legend,line width=1.0pt,fill=mycolor1,fill opacity=5.000000e-01,draw=black,forget plot]
 table {myfile-1.tsv};

\addplot[area legend,solid,draw=mycolor2,forget plot]
 table {myfile-2.tsv};

\addplot[area legend,solid,draw=blue,forget plot]
 table {myfile-3.tsv};

\addplot[area legend,solid,draw=blue,forget plot]
 table  {myfile-4.tsv};

\addplot[area legend,solid,draw=mycolor3,forget plot]
 table {myfile-5.tsv};

\addplot[area legend,solid,draw=mycolor4,forget plot]
 table{myfile-6.tsv};

\addplot[area legend,solid,draw=mycolor5,forget plot]
 table{myfile-7.tsv};

\addplot[area legend,solid,draw=mycolor6,forget plot]
 table{myfile-8.tsv};

\addplot[area legend,solid,draw=mycolor7,forget plot]
 table{myfile-9.tsv};

\addplot[area legend,solid,draw=red!25!orange,forget plot]
 table{myfile-10.tsv};

\addplot[area legend,solid,draw=mycolor8,forget plot]
 table{myfile-11.tsv};

\addplot[area legend,solid,draw=black!50!red,forget plot]
 table{myfile-12.tsv};

\end{axis}
\end{tikzpicture}%
    \end{minipage}
  \caption{Feasible set of parameters in $\Theta$, and contour lines of the quantity $\pd{\pa|Z^{N_s}}$, obtained for \(\pa^0=[1\ 0]^T\). }\label{fig:laguerre1}
\end{SCfigure} 
 \begin{table}[t]
\caption{Mean ($\mu$) and variance ($\var^2$) of the confidence obtained from  
      100 experiments with 200 measurements each. 
      }\label{Table:Conf}
      \centering
          \begin{tabular}[]{|p{1.3cm}p{0.9cm}p{0.82cm} |p{1.2cm}p{0.9cm}p{0.78cm}|}\hline &&&&&\\[-2ex]
    $\ \ \pa^0$&$\mu$&$\var^2$ &$\ \ \pa^0$& $\mu$&$\var^2$\vspace{0.1em}\\
    \hline\hline &&&&&\\[-2ex]
  $\begin{bmatrix}\mbox{-}1&\mbox{-}1
    \end{bmatrix}^T$&$0.348$& $0.073$ &$\begin{bmatrix}\ 1&\mbox{-}1
    \end{bmatrix}^T$& $ 0.491$&$ 0.085  $\\[0.7ex]
    $\begin{bmatrix}\mbox{-}1&\ 0
    \end{bmatrix}^T$&$0.705$& $ 0.060 $&$\begin{bmatrix}\ 1&\ 0
    \end{bmatrix}^T$&$  0.730 $&$ 0.056 $\\[0.7ex]
    $\begin{bmatrix}\mbox{-}1&\ 1
    \end{bmatrix}^T$&$0.492$&$0.086$& $\begin{bmatrix}\ 1&\ 1
    \end{bmatrix}^T$&$0.339$&$ 0.065$\\ \hline
  \end{tabular} 
\end{table}

\subsection{Verifying Unbounded-Time Properties  Using Invariant Sets}\label{sec:ubp}
In this section we extend the approach unfolded in Section \ref{sec:LinSys}, 
to hold on the LTL fragment $\psi::=\letter|\nex\psi|\psi_1\wedge\psi_2$ with additionally the {\it unbounded} invariance (safety) operator. 
Recall the form of the $k$-bounded and of the unbounded invariance operators, 
namely $\always^k\psi=\bigwedge_{i=0}^{k}\nex^i \psi$ and $\always\psi=\neg(\texttt{true}\until\neg \psi)$ respectively.  
The extension from a $k$-bounded operator, covered by the result in Theorem \ref{Thm1}, 
to the unbounded invariance one, 
is based on the concept of robust positive invariance \cite[Def. 4.3]{Blanchini2007}, 
recalled next.  
 \begin{definition} 
 For the system $\x{t+1}=A\x{t}+Bu(t)$, the set $\mathcal{S}\subseteq \X$ is said to be robustly positively invariant if, for all $\x{0}\in\mathcal{S}$ and $u(t)\in\A$, the condition $\x{t}\in \mathcal{S}$ holds for all $t\geq 0$. 
 \end{definition}

Recall that the feasible set $\Theta_\BLTL$ is defined as the set of parameters for which 
property $\BLTL$ holds, 
namely $\forall \pa \in \Theta_\BLTL:\M(\pa)\vDash\BLTL$.  
The satisfaction relation $\M(\pa)\vDash\BLTL$ depends implicitly on the set of initial states $\x{0}\in\X_{ver}$ and on the set of inputs $\A_{ver}$. 
Let us extend the definition of the feasible set to explicitly account for its dependence on the set of initial conditions:  
given a bounded and convex set $\mathcal{S}\subset \X$, let $\Theta_{\psi}(\mathcal{S})$ be defined as  the set of parameters in $\Theta$ for which the parameterised models  $\M(\pa)$ initialised with $\x{0}\in  \mathcal{S} $  satisfy $\psi$ over input signals $u(t)\in \A_{ver}$ $t\geq 0$. Hence the feasible set $\Theta_\psi$ can be written as a function of the set of initial states $\X_{ver}$, that is $\Theta_\psi\left(\X_{ver}\right)$. Thus
the extended map $\Theta_\psi\left(\cdot\right)$ takes subsets of the state space into subsets of the parameter space.  
 Note that if $\mathcal{S}$ is a robustly positively invariant set that includes the set of initial states $\X_{ver}\subseteq\mathcal{S}$, then for all $\pa\in\Theta_{\BLTL}(\mathcal{S})$ the models $\M(\pa)$ satisfy $\BLTL$ over all infinite-time model traces $\x{t}$: this allows to state that $\M(\pa)\vDash \always \psi$. 
{We can show that the following holds.}
 
\begin{lem}\label{lem:decreasing}
The function $\Theta_\BLTL(\cdot):2^\X\rightarrow 2^\Theta$, 
for specifications obtained as  $\psi::= \letter \mid\nex\psi\mid \psi_1\wedge\psi_2$, is monotonically decreasing: 
that is if $\mathcal{S}_1\subseteq \mathcal{S}_2$, 
then $\Theta_\BLTL(\mathcal{S}_2)\subseteq\Theta_\BLTL(\mathcal{S}_1).$ 
\end{lem}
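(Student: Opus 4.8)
The plan is to prove the monotonicity claim by induction on the structure of the LTL fragment $\psi::=\letter\mid\nex\psi\mid\psi_1\wedge\psi_2$, exploiting the fact that the satisfaction relation $\langle\M(\pa),\x{t}\rangle\vDash\psi$ is defined pointwise over states. The key observation is that, by definition, $\pa\in\Theta_\BLTL(\mathcal{S})$ if and only if $\langle\M(\pa),\x{0}\rangle\vDash\psi$ for \emph{all} $\x{0}\in\mathcal{S}$ (under all admissible input signals $u(t)\in\A_{ver}$). Since a universal quantifier over a larger set is a stronger requirement, enlarging $\mathcal{S}$ can only shrink the set of parameters that meet it. Concretely, if $\mathcal{S}_1\subseteq\mathcal{S}_2$ and $\pa\in\Theta_\BLTL(\mathcal{S}_2)$, then $\langle\M(\pa),\x{0}\rangle\vDash\psi$ holds for every $\x{0}\in\mathcal{S}_2$, hence in particular for every $\x{0}\in\mathcal{S}_1$, so $\pa\in\Theta_\BLTL(\mathcal{S}_1)$; this gives $\Theta_\BLTL(\mathcal{S}_2)\subseteq\Theta_\BLTL(\mathcal{S}_1)$.

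First I would make the pointwise characterisation precise: define, for a fixed parameter $\pa$ and a fixed admissible input signal, the set $G_\psi(\pa)=\{\x{}\in\X : \langle\M(\pa),\x{}\rangle\vDash\psi\}$ of states from which $\psi$ is satisfied, and note $\pa\in\Theta_\BLTL(\mathcal{S})\iff\mathcal{S}\subseteq G_\psi(\pa)$ (with the appropriate universal quantification over $u(\cdot)\in\A_{ver}$ folded into the definition of $G_\psi$, or handled by taking an intersection over input signals). The statement then reduces to the trivial set-theoretic fact that $\mathcal{S}_1\subseteq\mathcal{S}_2\subseteq G_\psi(\pa)$ implies $\mathcal{S}_1\subseteq G_\psi(\pa)$, which requires no induction at all once the characterisation is in place.

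Alternatively, if one wants to mirror the constructive style of Theorem \ref{Thm1}, I would run the induction explicitly. For an atomic proposition $\letter$, the satisfaction condition $A_\letter C(\pa)\x{}\le b_\letter$ is imposed for each $\x{}\in\mathcal{S}$; the feasible $\pa$-set is $\bigcap_{\x{}\in\mathcal{S}}\{\pa : A_\letter C(\pa)\x{}\le b_\letter\}$, and intersecting over a larger index set $\mathcal{S}_2\supseteq\mathcal{S}_1$ yields a smaller set. For $\nex\psi_1$, satisfaction from $\mathcal{S}$ is satisfaction of $\psi_1$ from the one-step reachable set $A\mathcal{S}+B\A_{ver}$; since $\mathcal{S}_1\subseteq\mathcal{S}_2$ implies $A\mathcal{S}_1+B\A_{ver}\subseteq A\mathcal{S}_2+B\A_{ver}$, monotonicity is inherited from the induction hypothesis applied to $\psi_1$. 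For $\psi_1\wedge\psi_2$, we have $\Theta_{\psi_1\wedge\psi_2}(\mathcal{S})=\Theta_{\psi_1}(\mathcal{S})\cap\Theta_{\psi_2}(\mathcal{S})$, and the intersection of two monotonically decreasing maps is monotonically decreasing.

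I do not expect any genuine obstacle here; the result is essentially a bookkeeping lemma whose only subtlety is being careful that the universal quantification over initial states (and over input signals) is threaded correctly through each inductive case, so that ``larger set of initial conditions'' consistently translates into ``more constraints on $\pa$.'' The mild point worth spelling out is the $\nex$ case, where one must confirm that the extended map $\Theta_\psi(\cdot)$ evaluated at $\mathcal{S}$ genuinely equals $\Theta_{\psi_1}(\cdot)$ evaluated at the reachable set $A\mathcal{S}+B\A_{ver}$, i.e., that the reachable-set operator itself is monotone — which is immediate from linearity and the fixed input set $\A_{ver}$.
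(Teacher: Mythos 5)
Your proof is correct, but it takes a genuinely different route from the paper. You argue directly from the definition of the extended map: $\pa\in\Theta_\psi(\mathcal{S})$ if and only if $\langle\M(\pa),\x{0}\rangle\vDash\psi$ for \emph{all} $\x{0}\in\mathcal{S}$ (and all admissible inputs), so enlarging $\mathcal{S}$ only adds constraints on $\pa$, and monotonicity is a one-line set-theoretic consequence; your optional structural induction (atomic case as an intersection over $\mathcal{S}$, the $\nex$ case via monotonicity of $A\mathcal{S}+B\,\A_{ver}$, conjunction as intersection) is sound but not even needed once the pointwise characterisation is stated. The paper instead works inside the constructive machinery of Theorem \ref{Thm1}: it takes $\pa\in\Theta_\psi(\mathcal{S}_2)$, uses the affine-inequality representation $\left((I_{n_\psi}\otimes\mathbf{x})^T N_\psi+K_\psi\right)\pa\leq B_\psi$ at the vertices of $\mathcal{S}_2$, notes that satisfaction at the vertices extends to all of $\mathcal{S}_2$ by convexity, and then restricts to the vertices of $\mathcal{S}_1\subseteq\mathcal{S}_2$, finally handling parameterised $D$ and general convex sets. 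Your argument buys simplicity and generality: it needs no convexity, no finite vertex set, no linear parameterisation, and in fact no restriction to the particular LTL fragment, since it only uses the universal quantification over initial states built into the definition of $\Theta_\psi(\cdot)$. The paper's argument buys consistency with the computational representation it actually uses downstream (the vertex-based polyhedral description reappears in Lemma \ref{lem:robust} and Theorem \ref{Thm2}), so it simultaneously confirms that the vertex-computed sets inherit the monotonicity. The only point you should make explicit if you keep the inductive variant is the one you already flag: that $\Theta_{\nex\psi_1}(\mathcal{S})=\Theta_{\psi_1}(A\mathcal{S}+B\,\A_{ver})$ requires the quantification over the input at the current step to be separated cleanly from the quantification over future inputs, which the paper's definition (inputs $u(t)\in\A_{ver}$ for all $t\geq 0$) does allow.
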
 
  \begin{proof} 
We leverage the notation used in the proof of Theorem \ref{thm1}. Provided that the parameterised model is given as $(A,B,C(\pa),0)$, 
we show that any  $\pa\in\Theta_\BLTL(\mathcal{S}_2)$ is also an element of $\pa\in\Theta_\BLTL(\mathcal{S}_1)$.
Suppose $\mathcal{S}_2$ has a finite number of vertices $\mathbf{x}_i\in\mathcal{V}\left( \mathcal{S}_2\right) $, then  for any $\pa\in  \Theta_\BLTL(\mathcal{S}_2)$:
  \begin{align*}
  \textstyle \bigwedge_{ \mathbf{x}_i\in \mathcal{V}\left( \mathcal{S}_2\right) } \left((I_{n_{\psi}}\otimes   \mathbf{x}_i)^TN_\psi+K_\psi\right)\pa  \leq B_\psi \end{align*}
and for every  $\mathbf{x}\in \mathcal{S}_2$   
\begin{equation*}  \left((I_{n_{\psi}}\otimes   \mathbf{x})^TN_\psi+K_\psi\right)\pa  \leq B_\psi.\end{equation*}  
Since the vertices $\mathbf{x}_j\in\mathcal{V}\left(\mathcal{S}_1\right)$ are also elements of $\mathcal{S}_2$, then 
  \begin{align*}\textstyle\bigwedge_{ \mathbf{x}_j\in \mathcal{V}\left( \mathcal{S}_1\right) }  \left((I_{n_{\psi}}\otimes    \mathbf{x}_j)^TN_\psi+K_\psi\right)\pa  \leq B_\psi \end{align*}
and $\pa\in \Theta_\BLTL(\mathcal{S}_1)$. This reasoning can be trivially extended to include parameterised $D$ matrices.
Increasing the number of vertices of $\mathcal{S}_1$ and $\mathcal{S}_2$, does not change the result, hence the same holds if $\mathcal{S}_1$ and $\mathcal{S}_2$ are convex sets. \mbox{ }
 \end{proof}

Based on the result in 
Lemma \ref{lem:decreasing}, 
we conclude that the maximal feasible set $\Theta_{\always\psi}$ is obtained as a mapping from the minimal robustly positively invariant set $\mathcal{S}$ that includes $\X_{ver}$:  
$\Theta_{\always\psi}=\Theta_{\psi}(\mathcal{S})$. 
This leads next to consider under which conditions such minimal robustly positively invariant set $\mathcal{S}$ can be exactly computed or approximated.   

\subsubsection*{Feasible set for invariance properties with $\X_{ver}=\{0_n\}$}
For $\X_{ver}=\{0_n\}$, 
assuming a bounded interval $\A_{ver}$ with the origin in its interior, 
and under some basic assumptions on the dynamics (to be shortly discussed), 
the minimal robustly positively invariant set can be shown to be a bounded and convex set that includes the origin \cite{Blanchini2007}. 
Maintaining the condition of $\A_{ver}$ being bounded and having the origin in its interior,  
we first consider the case that $\X_{ver}=\{0_n\}$ and characterise $\mathcal{S}$ via tools available from set theory in systems and control; 
thereafter we look at extensions to more general sets of initial states $\X_{ver}$.  

Assume that $\A_{ver}$ includes the origin,  and denote the forward reachability mappings initialised with \(\mathcal{R}^{(0)}:=\{0_n\}\subset \X\) as  
 \begin{align}%
 \label{eq:ForwardReachability}
\mathcal{R}^{(i)}&:=  \operatorname{Post}(\mathcal{R}^{(i-1)} ), 
 \end{align}
with set operation $\operatorname{Post}(X):=\{\mathbf{x}'=A\mathbf{x}+Bu, \mathbf{x}\in X, u\in \A\}$.
Denote the limit reachable set 
as $\mathcal{R}^\infty =\cramped{\lim_{i\rightarrow \infty}\mathcal{R}^{(i)}}$.   
From literature we recall that properties of these $i$-step reachable sets, as given in 
 \cite{Blanchini2007}  include the following: 
for a reachable pair $(A,B)$ and an asymptotically stable matrix $A$, 
the $\infty$-reachable set $\mathcal{R}^{\infty}$ is bounded and convex \cite[Proposition 6.9]{Blanchini2007}. 
The $k$-step reachable set converges to the $\infty$-reachable set via \eqref{eq:ForwardReachability}, since it is monotonically increasing $\cramped{\mathcal{R}^{(i)}\subseteq\mathcal{R}^{(i+1)}}$. 
Moreover, $\mathcal{R}^\infty$ is the minimal robustly positively invariant set for the system, 
so that any 
positively invariant set includes $\mathcal{R}^\infty$ \cite[Proposition 6.13]{Blanchini2007}.  
Thus, starting from $\x{0}=0_n$,  all $\x{t}\in\mathcal{R}^{\infty}$, 
and furthermore of interest to this work {we conclude that} $\Theta_{\always^k\psi}\!\!=\cramped{\Theta_{\psi}\big(\mathcal{R}^{(k)}\big)}$ and $\Theta_{\always\psi}=\cramped{\Theta_{\psi}\big(\mathcal{R}^{\infty}\big)}$.

\subsubsection*{Feasible set for invariance properties under polytopic sets of initial states} 

More generally, if $\X_{ver}\subseteq \mathcal{R}^{\infty}$ and ceteris paribus, 
then $\mathcal{R}^{\infty}$ is the minimal robustly positively invariant set that includes $\X_{ver}$, 
and $\Theta_\BLTL(\mathcal{R}^\infty)=\Theta_{\always\BLTL}$.  
For finite iterations the reachable sets $\mathcal{R}^{(i)}$ are polytopes, 
and if $\mathcal{R}^{(i)}=\mathcal{R}^{(i+1)}$, 
then $\mathcal{R}^{(i)}=\mathcal{R}^{\infty}$\!. 
Though the iterations can stop in finite time, 
in general the number of iterations to obtain $\mathcal{R}^{\infty} $ can be infinite. 
Whilst the minimal robustly positively invariant set is not necessarily closed or a polytope,  
there exist methods to approximate $\mathcal{R}^\infty$ as detailed in \cite{Blanchini2007}.   
For instance, for stable systems, $\mathcal{R}^{(k)}$ is shown to converge to $\mathcal{R}^\infty$, 
in the sense that for all $\epsilon>0$ there exists $\bar{k}$ such that for $k\geq \bar{k}$,
$\mathcal{R}^{(k)}\!\!\subseteq\mathcal{R}^\infty\!\!\subseteq(1+\epsilon)\mathcal{R}^{(k)}$ \cite[Proposition 6.9]{Blanchini2007}.

Recall that the maximal feasible set $\Theta_{\always\psi}$ is obtained as a mapping from the minimal robustly positively invariant set $\mathcal{S}$ including $\X_{ver}$, 
that is $\Theta_{\always\psi}=\Theta_{\psi}(\mathcal{S})$. 
Let us extend the study 
to the case where the conditions $\X_{ver}=\{0_n\}$ or its extension $\X_{ver}\subseteq \mathcal{R}^\infty$ do not apply, 
while the condition on the bounded set $\A_{ver}$ is maintained, that is $0\in \A_{ver}$. 
Consider the more general case where the set of initial states is a polytope but not necessarily a subset of  $\mathcal{R}^\infty$. 
Denote the union of the forward reachability mappings initialised with \(\mathcal{R}^{(0)}_{\X_{ver}}:=\X_{ver}\subseteq \X\) as  
 \begin{align}\label{eq:ForwardReachability2} 
\mathcal{R}^{(i)}_{\X_{ver}}&:=  \mathcal{R}^{(i-1)}_{\X_{ver}}\cup\operatorname{Post}(\mathcal{R}^{(i-1)}_{\X_{ver}} )\ .
 \end{align} 
This set is also known in the literature as the {\it reach tube}. 
The corresponding set for infinite time is denoted as $\cramped{\mathcal{R}^\infty_{\X_{ver}} =\lim_{i\rightarrow \infty}\mathcal{R}^{(i)}_{\X_{ver}}}$.  
Notice that if $\X_{ver}\subseteq \mathcal{R}^{\infty}$, 
then $\cramped{\mathcal{R}^{\infty}=\mathcal{R}^{\infty}_{\X_{ver}}}$. 
The iteration is monotonically increasing $\cramped{\mathcal{R}^{(i)}_{\X_{ver}}\subseteq \mathcal{R}^{(i+1)}_{\X_{ver}}}$,  
and whenever $\mathcal{R}^{(i)}_{\X_{ver}} = \mathcal{R}^{(i+1)}_{\X_{ver}}$ it stops after a finite number of iterations with $ \mathcal{R}^{\infty}_{\X_{ver}}=\mathcal{R}^{(i)}_{\X_{ver}}$. 
Of course, also in this more general case, the number of iterations can be unbounded, 
however the convergence properties of $\mathcal{R}^{(i)}$ extend seamlessly to the case of sets $\mathcal{R}^{(i)}_{\X_{ver}}$. 
Since $ \mathcal{R}^{(i)}_{\X_{ver}}$ is a union of polytopes, it is not guaranteed to be a convex set. 
Still, it can be shown via the proof of Theorem \ref{Thm1} that the computation of the feasible set $\Theta_\BLTL(\mathcal{S})$ boils down to that of $\Theta_\BLTL\big(\operatorname{conv} (\mathcal{S})\big)$.  

\begin{rem}
Let us illustrate the convergence property for sets $\mathcal{R}^{(i)}_{\X_{ver}}$ as follows.  
For every vertex $\mathbf{x}^{i}(0) \in \X_{ver}$, 
select a decomposition $\mathbf{x}_r^i+\mathbf{x}_s^i$ with $\mathbf{x}_r^i\in\mathcal{R}^\infty$, 
which minimises \(\|\mathbf{x}_s^i\|\) for a chosen vector norm $\| \cdot\|$. 
Since every element $\x{0}\in \X_{ver}$ is a convex combination of the vertices $\mathbf{x}^{i}(0)$, it follows that for all $\x{0}\in \X_{ver}$:  
\begin{align*}
\x{0}&=\sum_{i} a_i \mathbf{x}^i(0)   = \sum_{i} a_i \mathbf{x}_r^i(0)+ \sum_{i} a_i   \mathbf{x}_s^i(0)\\
&\in \operatorname{conv}(\mathbf{x}_r^i(0))+ \operatorname{conv}(\mathbf{x}_s^i(0) )\subseteq \mathcal{R}^\infty+ \bar{\X}_{ver}, 
 \end{align*} 
with $\cramped{\sum_i a_i=1}$ for $a_i
 \geq0$ and where $\bar{\X}_{ver}=\operatorname{conv}(\mathbf{x}_s^i(0) )$. 
 We obtain that 
\(\cramped{\X_{ver}\subseteq \mathcal{R}^\infty + \bar{\X}_{ver}} \), 
and that the minimal positively invariant set $\mathcal{R}^\infty_{\X_{ver}}$ can be bounded by 
\(\cramped{\mathcal{R}^\infty + 
\lim_{k\rightarrow \infty} \bigcup_{i=0}^k A^i \bar{\X}_{ver}.}
\) 
Under condition of asymptotic stability on $A$, 
necessary for $\mathcal{R}^{\infty}$ to be a bounded and convex polytope, $A^i \bar{\X}_{ver}$\! will converge to $\{0_n\}$. 
Thus, the iteration $\mathcal{R}^{(k)}_{\X_{ver}}$ is monotonically increasing and bounded, hence it converges. 
If $\bar{\X}_{ver}$ includes the origin in its interior then there exists a finite iteration such that $ \cramped{\bigcup_{i=0}^k A^i \bar{\X}_{ver}=\bigcup_{i=0}^{k+1} A^{i} \bar{\X}_{ver}}$. 
Moreover, for any reachable pair $(A,B)$ and asymptotically stable $A$, 
the closure of the minimal robustly positively invariant set $\mathcal{R}^{\infty}_{\X_{ver}}$ includes the origin.  \end{rem}

\subsubsection*{Robust approximations of the feasible set via $\Theta_\psi(\cdot)$} 
In order to exploit convergence in the computation of the feasible set for invariance properties, 
we need to bound the error incurred with the use of approximations of the sets $\mathcal{R}^\infty_{\X_{ver}}$ or $\mathcal{R}^\infty$. 
Let $\mathcal{B}$ denote a unit ball centred at the origin and let the Hausdorff distance between sets $\mathcal{R}_{1}$ and $\mathcal{R}_{2}$ be defined as
\begin{align*}%
\delta_H(\mathcal{R}_1,\mathcal{R}_2)=\inf\{\epsilon\geq 0|\mathcal{R}_1\subseteq \mathcal{R}_2+\epsilon \mathcal{B}, \mathcal{R}_2 \subseteq\mathcal{R}_1+\epsilon \mathcal{B}\}. 
\end{align*}
 
{We can show that the following holds.} 
\begin{lem}\label{lem:robust}
Consider a polytope $\mathcal{R}$, and a property $\psi$ comprised of $\psi::= \letter|\nex \psi| \psi_1\wedge \psi_2$, with $\letter\in \alphabeth$,  
for which $\Theta_{\psi}(\mathcal{R})$ is a non-empty polytope with vertices $v_i$ and the origin in its interior. 
Let $A$ be bounded as $\|A\|_2\leq1$. 
Then for any $\epsilon_x\geq 0$, 
 \begin{align}\label{eq:n12}
 &\Theta_{\psi}(\mathcal{R}+\epsilon_x \mathcal{B}) \subseteq \Theta_{\psi}(\mathcal{R}) \subseteq \Theta_{\psi}(\mathcal{R} +\epsilon_x \mathcal{B})+\epsilon_\pa \mathcal{B}\\
 &\textrm{if }\epsilon_\pa\geq   \frac{\epsilon_x \epsilon_{p} \max_i(\|v_i\|)^2}{1+\epsilon_x\epsilon_p\max_i( \|v_i\|) },\textrm{ for } \epsilon_p:=   \max_{p\in AP}\frac{\|A_{p}\|_2}{|b_{p}|  }.\notag\qquad\end{align}
 \end{lem}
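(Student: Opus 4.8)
The plan is to prove the two inclusions in \eqref{eq:n12} separately. The inclusion $\Theta_{\psi}(\mathcal{R}+\epsilon_x\mathcal{B})\subseteq\Theta_{\psi}(\mathcal{R})$ is immediate: $\mathcal{R}\subseteq\mathcal{R}+\epsilon_x\mathcal{B}$ (since $0\in\epsilon_x\mathcal{B}$), the enlarged set is convex and bounded, so Lemma~\ref{lem:decreasing} applies verbatim. The content is the inclusion $\Theta_{\psi}(\mathcal{R})\subseteq\Theta_{\psi}(\mathcal{R}+\epsilon_x\mathcal{B})+\epsilon_\pa\mathcal{B}$, with $\mathcal{B}$ the Euclidean unit ball. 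I would fix an arbitrary $\pa\in\Theta_\psi(\mathcal{R})$ and exhibit $\pa'$ with $\pa'\in\Theta_\psi(\mathcal{R}+\epsilon_x\mathcal{B})$ and $\|\pa-\pa'\|_2\le\epsilon_\pa$. The candidate is a contraction toward the origin, $\pa'=\lambda\pa$ with $\lambda\in(0,1]$ fixed below; this choice is admissible because $\Theta_\psi(\mathcal{R})$ is convex and, by hypothesis, contains the origin.

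The technical step is to control the effect on the satisfaction inequality of replacing $\mathcal{R}$ by $\mathcal{R}+\epsilon_x\mathcal{B}$. By the characterisation underlying Theorem~\ref{Thm1} and Lemma~\ref{lem:decreasing}, $\pa'\in\Theta_\psi(\mathcal{R}+\epsilon_x\mathcal{B})$ holds iff $\big((I_{n_\psi}\otimes(\mathbf{r}+\epsilon_x\mathbf{b}))^TN_\psi+K_\psi\big)\pa'\le B_\psi$ for all $\mathbf{r}\in\mathcal{R}$ and all $\mathbf{b}$ with $\|\mathbf{b}\|_2\le1$. The state enters the inductively built matrices only through the $N_\psi$-term, so the left-hand side equals $\big((I_{n_\psi}\otimes\mathbf{r})^TN_\psi+K_\psi\big)\pa'+\epsilon_x(I_{n_\psi}\otimes\mathbf{b})^TN_\psi\pa'$. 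Unwinding the recursion of Theorem~\ref{Thm1} --- $\nex$ turns a state $\mathbf{x}$ into $A\mathbf{x}+Bu_i$ and premultiplies $N$ by $I\otimes A^{T}$, while $\wedge$ stacks the blocks --- the $\ell$-th component of the correction $(I_{n_\psi}\otimes\mathbf{b})^TN_\psi\pa'$ is $A_{p_{(\ell)}}C(\pa')A^{\,t_{(\ell)}}\mathbf{b}$, where $p_{(\ell)}$ and $t_{(\ell)}$ are the atomic proposition and the $\nex$-depth attached to row $\ell$, while the $\ell$-th entry of $B_\psi$ is the threshold $b_{p_{(\ell)}}$ of that proposition. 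Using $\|A\|_2\le1$, hence $\|A^{t}\mathbf{b}\|_2\le1$, and $\|C(\pa')\|_2\le\|C(\pa')\|_F=\|\pa'\|_2$, the $\ell$-th component of the correction is at most $\|A_{p_{(\ell)}}\|_2\|\pa'\|_2$ in modulus.

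Substituting $\pa'=\lambda\pa$, the $\ell$-th scalar inequality to verify is that $\lambda$ times the $\ell$-th component of $\big((I_{n_\psi}\otimes\mathbf{r})^TN_\psi+K_\psi\big)\pa$ plus $\lambda\epsilon_x$ times the $\ell$-th component of the correction is $\le b_{p_{(\ell)}}$; since $\pa\in\Theta_\psi(\mathcal{R})$ bounds the first term by $\lambda b_{p_{(\ell)}}$, and since the interior hypothesis on $\Theta_\psi(\mathcal{R})$ (read at $\pa=0$) forces $b_{p}>0$ for the propositions occurring in $\psi$, it suffices that $\lambda\epsilon_x\|A_{p_{(\ell)}}\|_2\|\pa\|_2\le(1-\lambda)b_{p_{(\ell)}}$; using $\|A_{p}\|_2/|b_{p}|\le\epsilon_p$ this is guaranteed by $\lambda\,\epsilon_x\epsilon_p\|\pa\|_2\le1-\lambda$, that is, by $\lambda=(1+\epsilon_x\epsilon_p\|\pa\|_2)^{-1}$. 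Then $\pa'\in\Theta_\psi(\mathcal{R}+\epsilon_x\mathcal{B})$ and $\|\pa-\pa'\|_2=(1-\lambda)\|\pa\|_2=\epsilon_x\epsilon_p\|\pa\|_2^{2}/(1+\epsilon_x\epsilon_p\|\pa\|_2)$. Finally, $\pa\in\Theta_\psi(\mathcal{R})=\operatorname{conv}\{v_i\}$ gives $\|\pa\|_2\le\max_i\|v_i\|$, and since $s\mapsto\epsilon_x\epsilon_p s^{2}/(1+\epsilon_x\epsilon_p s)$ is nondecreasing on $[0,\infty)$ we obtain $\|\pa-\pa'\|_2\le\epsilon_x\epsilon_p\max_i(\|v_i\|)^{2}/(1+\epsilon_x\epsilon_p\max_i(\|v_i\|))=\epsilon_\pa$, and hence also for any larger $\epsilon_\pa$. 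For the model class with parameterised $D$, the argument carries over via the reduction of Theorem~\ref{Thm1} to $(\tilde A,\tilde B,\tilde C(\pa),0)$, provided the reduced dynamics satisfy $\|\tilde A\|_2\le1$. I expect the main obstacle to be the bookkeeping in the second paragraph: confirming that the initial-state perturbation propagates exactly as $A^{t}\mathbf{b}$ through the inductively built $N_\psi$, that it contributes nothing via $K_\psi$ or $B_\psi$, and that each row of $B_\psi$ is one of the thresholds $b_p$.
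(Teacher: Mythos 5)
Your proposal is correct and follows essentially the same route as the paper's proof: the first inclusion via monotonicity of $\Theta_\psi(\cdot)$, and the second by contracting a point of $\Theta_\psi(\mathcal{R})$ radially toward the origin with exactly the same factor $\lambda=(1+\epsilon_x\epsilon_p\|\pa\|)^{-1}$ (the paper writes this as $v_i-r_\pa=(1-\alpha_i)v_i$ at the vertices), bounding the state-perturbation term row-wise via Cauchy--Schwarz, $\|A\|_2\le1$, and the labelling precision $\epsilon_p$, and finally using $\|\pa\|\le\max_i\|v_i\|$. Your explicit unwinding of the row as $A_{p}C(\pa)A^{t}\mathbf{b}$ is just a concrete form of the paper's Kronecker-norm bound $\max_j\|(N_\psi^j)^T\|_2|b^j|^{-1}\le\epsilon_p$, and your observation that the interior hypothesis yields $b_p>0$ is a slightly more careful justification of the same division step the paper performs.
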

\begin{proof}   
\noindent\textbf{1. \( \Theta_{\psi}(\mathcal{R}+\epsilon_x \mathcal{B}) \subseteq \Theta_{\psi}(\mathcal{R})\)} \\*
Based on the definition of this set (c.f. the proof of Theorem \ref{Thm1}), the set operation $\Theta_{\psi}(\cdot)$ is monotonically decreasing. Therefore 
\( \Theta_{\psi}(\mathcal{R}+\epsilon_x \mathcal{B}) \subseteq \Theta_{\psi}(\mathcal{R})\) holds.

\noindent\textbf{2. \( \Theta_{\psi}(\mathcal{R}) \subseteq \Theta_{\psi}(\mathcal{R}+\epsilon_x\mathcal{B})+\epsilon_\pa \mathcal{B}\)} \\*
Consider the case where the model is $(A,B,C(\pa),0)$.
To prove \eqref{eq:n12}, we first find a $\epsilon_\pa$ as a function of $\epsilon_x$ such that \begin{equation}\label{eq:n13}\Theta_{\psi}(\mathcal{R}) \subseteq \Theta_{\psi}(\mathcal{R}+\epsilon_x\mathcal{B})+\epsilon_\pa \mathcal{B}.\end{equation}
Let $v_i$ be the vertices of the polytope $v_i\in\mathcal{V}\left(\Theta_{\psi}(\mathcal{R})\right) $, then \eqref{eq:n13} holds if and only if $v_i\in \Theta_{\psi}(\mathcal{R}+\epsilon_x\mathcal{B})+\epsilon_\pa \mathcal{B} $. Equivalently, this means that there exists a $r_\pa\in \epsilon_\pa \mathcal{B} $ such that $v_i-r_\pa\in \Theta_{\psi}(\mathcal{R}+\epsilon_x\mathcal{B})$. 
This is equivalent to demanding that for every $\mathbf{x}_j^T\in \mathcal{V}\left(\mathcal{R}\right)$, $v_i\in\mathcal{V}\left(\Theta_{\psi}(\mathcal{R})\right) $ and  $r_x\in \epsilon_x\mathcal{B}$, there exists a vector $r_\pa\in \epsilon_\pa \mathcal{B}$:
 \begin{align}  \left((I_{n_{\psi}}\otimes (\mathbf{x}_j^T+r_x^T))N_\psi+K_\psi\right)  \left(v_i -r_\pa\right) & \leq B_\psi\notag \\\Leftrightarrow \quad \left((I_{n_{\psi}}\otimes \mathbf{x}_j^T)N_\psi+K_\psi\right)\left(v_i -r_\pa\right) \notag\\
+  \left((I_{n_{\psi}}\otimes r_x^T)N_\psi\right)  \left(v_i -r_\pa\right)&\leq B_\psi.\notag\end{align}
Take $(v_i-r_\pa)=(1-\alpha_i) v_i$ with $\alpha_i\in [0,1)$, then 
 \begin{align} 
 \left((I_{n_{\psi}}\otimes \mathbf{x}_j^T)N_\psi+K_\psi\right)  (1-\alpha_i) v_i&\notag\\
+ \left((I_{n_{\psi}}\otimes r_x^T)N_\psi\right)   (1-\alpha_i) v_i&\leq B_\psi\notag\\
\Leftrightarrow \quad(1-\alpha_i)   (I_{n_{\psi}}\otimes r_x^T)N_\psi    v_i&\leq\alpha_i B_\psi. \label{eq:n17}
\end{align}
Separate the matrix $N_\psi$ and $B_\psi$ into its block matrices $N_\psi^j=[N_\psi]_{\{1+(j-1)n :nj\}\times\{ 1:n\}}$ and $ B^j=[B^j_\psi]_{j}$ such that inequality \eqref{eq:n17} is equivalent to the set of inequalities
    \begin{align}
(1-\alpha_i)  r_x^TN^j_\psi  v'_i&\leq\alpha_i b^j, \mbox{ for }  j=1,\ldots ,n_\psi   \\
 \Leftrightarrow\quad  r_x^TN^j_\psi  v'_i&\leq\frac{\alpha_i}{(1-\alpha_i)} b^j\ \ .&
\end{align}
Given that $0\in\Theta_\psi(\mathcal{R})$,  it follows that $b_j\geq 0$ for $j=1,\ldots,n_\psi$
\begin{align*}
 \max_{j}  \left( r_x^TN^j_\psi  v'_i\right)  (b^j)^{-1}\leq\frac{\alpha_i}{(1-\alpha_i)} \ \ .&
\end{align*}
The term on the left can be upper bounded based on the Cauchy-Schwarz inequality  
\begin{align*}
 &\max_{j}  \left( r_x^TN^j_\psi  v'_i\right)  (b^j)^{-1}\leq \max_{j}  \|(N^j_\psi)^T r_x \|_2 \|v'_i\|_2 (b^j)^{-1}\\
 &\quad\leq \max_{j}  \|(N^j_\psi)^T\|_2\| r_x \|_2 \|v'_i\|_2 (b^j)^{-1}  \mbox{ and } \|r_x\|_2\leq \epsilon_x\\
 &\quad\leq \epsilon_x \epsilon_p\|v'_i\|_2.
\end{align*}
The last inequality follows from the introduction of  the precision of the labelling, 
denoted as $\epsilon_p$, and defined as \begin{align}
\epsilon_p&= \max_{p\in AP}\frac{\|A_{p}\|_2}{|b_{p}| }. 
\end{align}
Remember that $\|L\otimes K\|_2=\|L\|_2\|K\|_2$. Then based on Theorem \ref{Thm1} and on the condition $\|A\|_2\leq 1$, 
it can be shown that 
\[ \max_{j}\|(N^j_\psi)^T \|_2 |b^j|^{-1} \leq \max_{p\in AP}\frac{\|A_{p}\|_2}{|b_{p}| }. \]
Note that $\frac{\alpha_i}{(1-\alpha_i)} $ monotonically increases with $\alpha_i$ for $\alpha_i\in [0,1)$. 
Therefore a bound on $\alpha_i$ can be found as
\begin{align}
\alpha_i  = (\epsilon_x  \epsilon_p\|v_i\|) /(1+\epsilon_x \epsilon_p\|v_i\|)  \mbox{ for }  j=1,\ldots ,n_\psi. 
\end{align}
It follows that \eqref{eq:n13} holds if 
\begin{align}
\epsilon_\pa&= \max(\|v_i\|_2) \frac{\epsilon_x \epsilon_p \max(\|v_i\|_2)}{1+\epsilon_x\epsilon_p \max(\|v_i\|_2)}. 
\end{align}
For the case that the model is parameterised in both $S$ and $D$, i.e., $(A,B,C(\pa),D(\pa))$ the derivation is a bit more cumbersome but can be repeated with no change to the end result.\end{proof}  

\medskip

Let us briefly discuss the conditions under which Lemma \ref{lem:robust} is applicable. 
The condition that $\Theta_\psi(\mathcal{R})$ is not empty is raised to avoid the trivial case where $\Theta_\psi(\mathcal{R})=\emptyset$ \eqref{eq:n12} holds for all $\epsilon_\pa$. 
The condition that $\Theta_\psi(\mathcal{R})$ is a polytope and hence bounded is necessary to obtain a bounded Hausdorff distance. 
This distance quantifies the difference between two sets, and is a necessary step to bound the approximation error. 
The requirement that $\Theta_\psi(\mathcal{R})$ includes the origin is a sufficient condition and relates to well-posedness for bounded input sets including the origin. 
When considering invariance properties defined for $\cramped{0\in \A_{ver}}$ and for any polytope $\X_{ver}$, 
the requirement that $\cramped{0_n\in\Theta_\psi(\cdot)}$ is necessary for $\Theta_{\always \psi}$ to be non-empty: 
this can be intuitively illustrated by noting that under an assumption of asymptotic stability for $A$, 
for any $\pa$ and for $u(\cdot)=0$ the output $\y{t,\pa}$ of the model in \eqref{eq:model} converges to 0.  
Hence for a property to be satisfied under these conditions it should at least hold for the zero output, which is equivalent to demanding that it holds for $\pa=0_n$.  
For any atomic proposition $\ap\in AP$ (see Equation \eqref{eq:labelling}) it can be shown that there is an invertible mapping between the row vectors, proportional to the normals of the faces of the polyhedral set $\Theta_{\ap}(\x{0})$, and the initial state $\x{0}$. 
Therefore, if $\mathcal{R}^{(k)}$ has the origin in its interior, then $\Theta_{\ap}(\mathcal{R}^{(k)})$ has to be bounded, and as a consequence so has any feasible set comprising this atomic proposition.  
This holds for $k\geq n$ if $(A,B)$ is a reachable pair and if $\A_{ver}$ has $0$ in its interior. Under the same conditions there exists a $k$ such that $\mathcal{R}^{(k)}_{\X_{ver}}$ has $0_n$ in its interior.
The generalisation to the case dealing with an Hausdorff distance of the feasible set for invariance properties with a set of inputs $0\not\in \A_{ver}$ is outside of the scope of this work.

\subsubsection*{Convergence properties} 
We can employ Lemma \ref{lem:robust} to bound the Hausdorff distance between  $\Theta_\psi(\mathcal{R}^{(k)}_{\X_{ver}})$ and  $\Theta_{\always\psi}$.  
If $\X_{ver}=\{0_n\}$ and the spectral radius of $A$ is strictly less than $1$ (that is $\rho(A)<1$), then the Hausdorff distance can be bounded as
    \begin{align}%
    \label{HausForward}
    \delta_H(\mathcal{R}^{(k)},\mathcal{R}^{\infty})\leq \epsilon(k) :=  \|A^k\|_{2}\max_{u\in \A}{(|u|)}  c_1,
    \end{align}
    with $c_1$ a bound on $\sum_{i=0}^{\infty} \|A^i B\|$, which is the peak-to-peak performance of the dynamical system formed by $(A,B)$.
In case that $\X_{ver}\not \subseteq \mathcal{R}^\infty $ then  the forward reachable  iteration can be rewritten as
\[\cramped{\mathcal{R}^{(k)}_{\X_{ver}}= \big( \bigcup_{i=0}^k A^i {\X}_{ver}\big)
+\mathcal{R}^{(k)}.}
\]
The Hausdorff norm can be bounded as
\[\cramped{\delta_H(\mathcal{R}_{\X_{ver}}^{(k)}, \mathcal{R}_{\X_{ver}}^{\infty})}\leq \cramped{\epsilon(k)+\|A^{k+1}\|_2  \delta_H\left(\X_{ver},\{0_n\}\right)}.\]
Note that for $\rho(A)<1$ the norm $\|A^{k}\|_2\rightarrow 0$ for $k\rightarrow \infty$.
In case the conditions of Lemma \ref{lem:robust}  on $\cramped{\mathcal{R}_{\X_{ver}}^{(k)}\subseteq \X}$ and $\Theta_\psi\big(\mathcal{R}_{\X_{ver}}^{(k)}\big)$ hold,  
the Hausdorff distance $\delta_H(\Theta_{\always^k\psi},\Theta_{\always\psi})$ can be bounded by 
\begin{equation}\label{Haus_feas}  
\|A^k\|_{2}\max_i(\|v_i\|)^2 \epsilon_{p} \big(\max_{u\in \A}{(|u|)}  c_1 +\|A\| \delta_H(\X_{ver},\{0_n\})\big).
\end{equation}  
\subsubsection*{Use in the verification of unbounded-time properties} 
Based on the convergence properties of the feasible set, 
the asymptotic behaviour of the confidence computed in Proposition \ref{thm1} can be stated.   
\begin{cor}[Convergence]\label{thm:cvg}
Under the conditions of Lemma \ref{lem:robust}, 
for a Gaussian distribution $\pd{\pa}\sim \mathcal{N}\left(\mu_\pa,\Var_\pa\right)$ with a covariance $\Var_\pa\succ0$, 
$\p{\pa\in \Theta_{\always^k\psi}}\rightarrow \p{\pa\in \Theta_{\always\psi}}$ for $k\rightarrow \infty$. 
\end{cor}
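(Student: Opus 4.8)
The plan is to realise the bounded-horizon feasible sets $\Theta_{\always^k\psi}$ and the unbounded one $\Theta_{\always\psi}$ as images, under the set map $\Theta_\psi(\cdot)$, of the reach tube \eqref{eq:ForwardReachability2} and of its limit, and then to transfer the set convergence already established in Section \ref{sec:ubp} through the Gaussian measure. First I would record the structural facts: since the reach tube is monotonically increasing, $\mathcal{R}^{(k)}_{\X_{ver}}\subseteq\mathcal{R}^{(k+1)}_{\X_{ver}}\subseteq\mathcal{R}^\infty_{\X_{ver}}$, so $\Theta_{\always^k\psi}=\Theta_\psi(\mathcal{R}^{(k)}_{\X_{ver}})$ and $\Theta_{\always\psi}=\Theta_\psi(\mathcal{R}^\infty_{\X_{ver}})$ (reducing $\Theta_\psi$ of a union to $\Theta_\psi$ of its convex hull, as justified at the end of Section \ref{sec:ubp}), and the monotonicity of $\Theta_\psi(\cdot)$ from Lemma \ref{lem:decreasing} yields the nested chain $\Theta_{\always\psi}\subseteq\cdots\subseteq\Theta_{\always^{k+1}\psi}\subseteq\Theta_{\always^{k}\psi}$. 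Moreover, the trace semantics of $\always\psi$ require every suffix to satisfy $\psi$, equivalently that $\always^k\psi$ hold for every $k$; carrying the universal quantification over all model traces through, this gives $\Theta_{\always\psi}=\bigcap_{k\geq 0}\Theta_{\always^k\psi}$. By Theorem \ref{Thm1} each $\Theta_{\always^k\psi}$ is a convex polytope, hence $\Theta_{\always\psi}$ is convex.

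Next I would invoke the quantitative estimate \eqref{Haus_feas}: under the hypotheses of Lemma \ref{lem:robust}, which are assumed to be in force, together with asymptotic stability $\rho(A)<1$ --- the condition that makes $\mathcal{R}^\infty_{\X_{ver}}$ a bounded convex polytope in the first place --- one has $\delta_H(\Theta_{\always^k\psi},\Theta_{\always\psi})\leq\varepsilon(k)$ with $\varepsilon(k)$ bounded by a fixed multiple of $\|A^k\|_2$, hence $\varepsilon(k)\to 0$. Combined with the nesting this yields $\Theta_{\always\psi}\subseteq\Theta_{\always^k\psi}\subseteq\Theta_{\always\psi}+\varepsilon(k)\mathcal{B}$, so that
\begin{equation*}
0\leq\p{\pa\in\Theta_{\always^k\psi}}-\p{\pa\in\Theta_{\always\psi}}\leq\p{\pa\in(\Theta_{\always\psi}+\varepsilon(k)\mathcal{B})\setminus\Theta_{\always\psi}}.
\end{equation*}
To close the argument I would use $\Var_\pa\succ 0$, which makes the Gaussian measure $\p{\cdot}$ (whether the prior $\pd{\pa}$ of the statement or the Gaussian posterior of \eqref{BayesianID}) absolutely continuous with respect to Lebesgue measure: the closed sets $\Theta_{\always\psi}+\varepsilon(k)\mathcal{B}$ decrease, as $k\to\infty$, to the closure $\overline{\Theta_{\always\psi}}$, so by continuity from above of the finite measure $\p{\cdot}$ we get $\p{\pa\in\Theta_{\always\psi}+\varepsilon(k)\mathcal{B}}\to\p{\pa\in\overline{\Theta_{\always\psi}}}$; since $\Theta_{\always\psi}$ is convex its topological boundary is Lebesgue-null, hence null for $\p{\cdot}$, so the limit equals $\p{\pa\in\Theta_{\always\psi}}$ and the squeeze finishes the proof. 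Equivalently, one may bypass \eqref{Haus_feas} and apply continuity from above directly to the decreasing measurable sets $\Theta_{\always^k\psi}$, whose intersection is $\Theta_{\always\psi}$; the Hausdorff route is preferable only in that it additionally exhibits a convergence rate.

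The step I expect to be the main obstacle is not the set convergence --- that reduces to Lemmas \ref{lem:decreasing} and \ref{lem:robust} --- but the transfer across the measure, i.e.\ showing that the probability mass trapped in the shrinking shell $(\Theta_{\always\psi}+\varepsilon(k)\mathcal{B})\setminus\Theta_{\always\psi}$ vanishes. This is precisely where the nondegeneracy $\Var_\pa\succ 0$ is needed, guaranteeing that $\p{\cdot}$ charges no lower-dimensional set and in particular no convex-set boundary; one must also take care to justify that $\Theta_{\always\psi}$ is exactly the nested intersection $\bigcap_k\Theta_{\always^k\psi}$, which hinges on correctly handling the universal quantifier over all model traces in the satisfaction relation $\M(\pa)\vDash\always\psi$.
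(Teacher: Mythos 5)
Your proposal is correct and follows essentially the same route as the paper's (very terse) proof: the monotonically decreasing polytopes $\Theta_{\always^k\psi}$ converge to $\Theta_{\always\psi}$ (made quantitative via Lemma \ref{lem:robust} and the bound \eqref{Haus_feas}), and the nondegenerate Gaussian, having a bounded density that charges no convex-set boundary, forces the mass of the shrinking shell to vanish — your write-up simply supplies the measure-theoretic details the paper leaves implicit. Your parenthetical alternative (continuity from above applied to the nested sets with $\Theta_{\always\psi}=\bigcap_k\Theta_{\always^k\psi}$) is in fact even more elementary than the paper's argument, needing no Hausdorff bound and no nondegeneracy, but it is offered only as an aside.
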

\begin{proof}[of Corollary \ref{thm:cvg}]
 For a strictly positive $\Var_\pa$, 
 the Gaussian density distribution takes finite values over the parameter space, 
 therefore the convergence of a monotonically-decreasing polytope over the parameter space induces the convergence of the associated probability measure. 
 \end{proof}

Theorem \ref{Thm1} can now be generalised to include unbounded-time invariance properties as follows.
\begin{thm}\label{Thm2}
Consider a polytopic set of initial states $x(0)\in \X_{ver}$, 
inputs $u(t) \in \mathbb{U}_{ver}$ for $ t\geq 0$, 
and a labelling map as in \eqref{eq:labelling}. 
Let $\hat{\mathcal{R}}^{\infty}_{\X_{ver}}$ be a polytopic superset of the minimal robustly positively invariant set that includes $\X_{ver}$, 
denoted as ${\mathcal{R}}^{\infty}_{\X_{ver}}$;  
then the feasible set admits a polyhedral subset $\hat{\Theta}_{\psi} \subset \Theta_{\psi}$ for every specification $\psi$ expressed within the LTL fragment $\psi:=\letter|\nex\psi|\psi_1\wedge\psi_2|\always \psi$, 
and if $\hat{\mathcal{R}}^{\infty}_{\X_{ver}}={\mathcal{R}}^{\infty}_{\X_{ver}}$\ then  $\hat{\Theta}_{\psi}=\Theta_{\psi}$.\end{thm}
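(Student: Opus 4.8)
The plan is to reduce Theorem \ref{Thm2} to the combination of Theorem \ref{Thm1} and Lemma \ref{lem:decreasing}, treating the unbounded invariance operator $\always\psi$ as the only new ingredient on top of the already-handled fragment $\psi::=\letter|\nex\psi|\psi_1\wedge\psi_2$. First I would recall that, by Lemma \ref{lem:decreasing}, the map $\Theta_{\psi}(\cdot)$ is monotonically decreasing, and that for $\mathcal{S}$ a robustly positively invariant set containing $\X_{ver}$ one has $\M(\pa)\vDash\always\psi$ for every $\pa\in\Theta_{\psi}(\mathcal{S})$ (as argued in the text just before Lemma \ref{lem:decreasing}). Since $\hat{\mathcal{R}}^{\infty}_{\X_{ver}}\supseteq{\mathcal{R}}^{\infty}_{\X_{ver}}$ and ${\mathcal{R}}^{\infty}_{\X_{ver}}$ is itself robustly positively invariant and contains $\X_{ver}$, I would set $\hat{\Theta}_{\always\psi}:=\Theta_{\psi}\big(\hat{\mathcal{R}}^{\infty}_{\X_{ver}}\big)$ and observe that monotonicity gives $\hat{\Theta}_{\always\psi}=\Theta_{\psi}\big(\hat{\mathcal{R}}^{\infty}_{\X_{ver}}\big)\subseteq\Theta_{\psi}\big({\mathcal{R}}^{\infty}_{\X_{ver}}\big)=\Theta_{\always\psi}$, with equality precisely when $\hat{\mathcal{R}}^{\infty}_{\X_{ver}}={\mathcal{R}}^{\infty}_{\X_{ver}}$. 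This already handles the case where $\always\psi$ sits at the top level.

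Next I would address why $\hat{\Theta}_{\always\psi}$ is a polyhedron. By Theorem \ref{Thm1}, $\Theta_{\psi}(\mathcal{S})$ is a polyhedron whenever $\mathcal{S}$ is a bounded polytope and $\psi$ lies in the fragment without $\always$; applying this with $\mathcal{S}=\hat{\mathcal{R}}^{\infty}_{\X_{ver}}$ (a polytopic superset by hypothesis, and one may pass to $\operatorname{conv}$ if $\hat{\mathcal{R}}^{\infty}_{\X_{ver}}$ is presented as a union of polytopes, exactly as in the remark following \eqref{eq:ForwardReachability2}) yields polyhedrality of $\hat{\Theta}_{\always\psi}$. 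The recursion over the full syntax then goes by structural induction: for $\letter$, $\nex\psi$, and $\psi_1\wedge\psi_2$ the constructions of $N_\psi,K_\psi,B_\psi$ from the proof of Theorem \ref{Thm1} apply verbatim, and for $\always\psi$ I replace the model-trace obligation ``for all $t\geq0$'' by the single set-membership obligation over $\hat{\mathcal{R}}^{\infty}_{\X_{ver}}$, i.e. I treat $\always\psi$ as $\psi$ evaluated with initial-state set $\hat{\mathcal{R}}^{\infty}_{\X_{ver}}$ in place of $\X_{ver}$; since this set is again a bounded polytope, the inductive machinery closes. Throughout, the extension to a parameterised $D$ is handled exactly as in part \textbf{3} of the proof of Theorem \ref{Thm1}, by augmenting the state with $u(t)$ and enlarging the vertex set.

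The main obstacle I anticipate is justifying the \emph{soundness} direction carefully when $\always\psi$ appears nested inside other operators (e.g. $\nex\always\psi$ or $\always\psi_1\wedge\psi_2$): one must argue that reducing $\always\psi$ to $\psi$-over-$\hat{\mathcal{R}}^{\infty}_{\X_{ver}}$ is still conservative after further $\nex$ and $\wedge$ operations are applied, which requires that $\hat{\mathcal{R}}^{\infty}_{\X_{ver}}$ is not merely positively invariant but is reached-into under the dynamics — this is where robust positive invariance (closure under $\operatorname{Post}$) is essential, since $\nex$ pushes the state through $\operatorname{Post}$ and one needs the invariant set to absorb that step. I would make this precise by a small lemma stating that for any state $\mathbf{x}\in\hat{\mathcal{R}}^{\infty}_{\X_{ver}}$ and any admissible input, $\operatorname{Post}(\{\mathbf{x}\})\subseteq\hat{\mathcal{R}}^{\infty}_{\X_{ver}}$, so the satisfaction relation $<\M(\pa),\mathbf{x}>\vDash\always\psi$ is equivalent to $<\M(\pa),\mathbf{x}'>\vDash\psi$ holding for all $\mathbf{x}'\in\hat{\mathcal{R}}^{\infty}_{\X_{ver}}$ simultaneously (the ``only if'' uses $\mathcal{R}^{\infty}_{\X_{ver}}\subseteq\hat{\mathcal{R}}^{\infty}_{\X_{ver}}$ and the exactness clause requires the reverse inclusion, giving the stated dichotomy $\hat{\Theta}_{\psi}=\Theta_{\psi}$ iff $\hat{\mathcal{R}}^{\infty}_{\X_{ver}}={\mathcal{R}}^{\infty}_{\X_{ver}}$). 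The rest — polyhedrality, the $D$-parameterisation, and the vertex-enumeration computation — is then routine bookkeeping inherited from Theorem \ref{Thm1}.
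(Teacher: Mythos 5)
Your treatment of a top-level $\always\psi$ coincides with the paper's: define $\hat{\Theta}_{\always\psi}=\Theta_{\psi}\big(\hat{\mathcal{R}}^{\infty}_{\X_{ver}}\big)$, use monotonicity of $\Theta_\psi(\cdot)$ (Lemma \ref{lem:decreasing}) and the vertex construction of Theorem \ref{Thm1} for polyhedrality, and get equality when $\hat{\mathcal{R}}^{\infty}_{\X_{ver}}={\mathcal{R}}^{\infty}_{\X_{ver}}$. The gap is in how you handle $\always$ occurring \emph{inside} the formula. The paper's first move, which you skip, is a normalisation: every formula of the fragment $\letter\mid\nex\psi\mid\psi_1\wedge\psi_2\mid\always\psi$ is LTL-equivalent to $\always\psi_1\wedge\psi_2$ with $\psi_1,\psi_2$ free of $\always$ (using $\nex\always\equiv\always\nex$, distribution of $\always$ over $\wedge$, and idempotence of $\always$), so only one top-level invariance ever has to be treated, and the feasible set is simply $\Theta_{\always\psi_1}\cap\Theta_{\psi_2}$. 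Your alternative — structural induction in which a nested $\always\psi$ is replaced by the state-independent obligation ``$\psi$ holds on all of $\hat{\mathcal{R}}^{\infty}_{\X_{ver}}$'' — is at best conservative and does not deliver the exactness clause of the theorem: for $\nex\always\letter$, the exact feasible set is $\Theta_{\nex\letter}\big(\mathcal{R}^{\infty}_{\X_{ver}}\big)$, i.e.\ $\letter$ over the reach tube of $\operatorname{Post}(\X_{ver})$, which can strictly contain $\Theta_{\letter}\big(\mathcal{R}^{\infty}_{\X_{ver}}\big)$ because the tube started from the one-step successors need not cover $\X_{ver}$ itself; so even with $\hat{\mathcal{R}}^{\infty}_{\X_{ver}}={\mathcal{R}}^{\infty}_{\X_{ver}}$ your construction yields a strict subset, contradicting the claimed $\hat{\Theta}_{\psi}=\Theta_{\psi}$.

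Two further points need repair even for the soundness direction. First, your ``small lemma'' that $\operatorname{Post}(\{\mathbf{x}\})\subseteq\hat{\mathcal{R}}^{\infty}_{\X_{ver}}$ for every $\mathbf{x}\in\hat{\mathcal{R}}^{\infty}_{\X_{ver}}$ is false in general: $\hat{\mathcal{R}}^{\infty}_{\X_{ver}}$ is only assumed to be a polytopic \emph{superset} of the minimal robustly positively invariant set, and supersets are not themselves invariant. The correct route is through the invariance of ${\mathcal{R}}^{\infty}_{\X_{ver}}$ together with ${\mathcal{R}}^{\infty}_{\X_{ver}}\subseteq\hat{\mathcal{R}}^{\infty}_{\X_{ver}}$, and the inductive hypothesis must then be restricted to states lying in the reach tube (otherwise ``constraint at $\mathbf{x}$ implies satisfaction at $\mathbf{x}$'' fails for arbitrary $\mathbf{x}$). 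Second, the claimed \emph{equivalence} between $\left<\M(\pa),\mathbf{x}\right>\vDash\always\psi$ and ``$\psi$ on all of $\hat{\mathcal{R}}^{\infty}_{\X_{ver}}$'' confuses the quantification over the whole initial set $\X_{ver}$ (where the equivalence with the reach tube does hold) with satisfaction at an individual nested state, whose own forward tube is generally a strict subset of ${\mathcal{R}}^{\infty}_{\X_{ver}}$. Adopting the paper's normalisation step removes all of these difficulties at once, which is precisely what its short proof exploits.
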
 
\begin{proof}
Every property $\phi::=p|\nex\psi|\psi_1\wedge\psi_2|\always \psi $ with $p\in AP$ can be rewritten as $\always \psi_1\wedge \psi_2$ where $\psi_1$ and $\psi_2$ have syntax $\psi: :=p|\nex\psi|\psi_1\wedge\psi_2 $.

For the set of initial states $\X_{ver}$, a property $\psi$ is invariant 
 \begin{align*}
\left<\M(\pa),\x{0}\right>\vDash \always \psi, \,  \forall  \x{0}\in\X_{ver}
\end{align*} 
 if and only if
   $\forall x\in \mathcal{R}^{\infty}_{\X_{ver}}:\left<\M(\pa),x\right>\vDash \psi $.
Let $\hat{\mathcal{R}}^{\infty}_{\X_{ver}}$ be a polytopic superset of ${\mathcal{R}}^{\infty}_{\X_{ver}}$ with a finite set of vertices $v_{\mathcal{R}}\in V_{\mathcal{R}}$, 
then the subset approximation of the feasible set $\Theta_{\always \psi}$ follows as 
$\Theta_{\always \psi}\supseteq\hat{\Theta}_{\always \psi}=$
 \begin{align*}\left\{\pa\in\Theta:\bigwedge_{v_{\mathcal{R}}\in V_{\mathcal{R}}}  \left( (I_{n_{\psi} }\otimes v_{\mathcal{R}}^T)N_\psi+K_\psi \right) \pa  \leq b_\psi\right\}
 \end{align*}
 where $\hat{\Theta}_{\always \psi}\subseteq {\Theta}_{\always \psi}$.
 Note that if $\hat{\mathcal{R}}^{\infty}_{\X_{ver}}={\mathcal{R}}^{\infty}_{\X_{ver}}$ then $\hat{\Theta}_{\always \psi}= {\Theta}_{\always \psi}$.
 The feasible set of $\always \psi_1\wedge \psi_2$ is equal to $\Theta_{\always \psi_1\wedge \psi_2}=\Theta_{\always \psi_1} \cap\Theta_{\psi_2}$. And $ \Theta_{\always \psi_1\wedge \psi_2}$ can be upper and lower bounded as 
 $\hat{\Theta}_{\always \psi_1} \cap\Theta_{\psi_2}\subseteq \Theta_{\always \psi_1\wedge \psi_2}\subseteq \Theta_{\always^k \psi_1} \cap\Theta_{\psi_2}$ with $k\in \mathbb{N}$.
 This proves Theorem \ref{Thm2} for the case where the model is $(A,B,C(\pa), 0)$. The additional parameterisation of $D$ does not change the reasoning.
 \mbox{ }
\end{proof} 

The extension beyond the LTL fragment discussed above may lead to feasible sets that are in general not convex and are therefore beyond the scope of this work. 

 \subsection{Case Study (cont.): Unbounded-Time Safety Verification}\label{ex:casecont}

We study convergence properties for the safety specification $\iota$ considered in the case study in Section \ref{ex:case} maintaining the same operating conditions as before for the safety verification and the experiment. 
In Figure \ref{freach} the forward reachability sets $\mathcal{R}^{(k)}$ with $k=1,\ldots,20$ are obtained for the model dynamics in \eqref{eq:laguerre}. 
Figure \ref{littleplots} (upper plot) displays bounds $\epsilon(k)$ on the Hausdorff distances $\delta_H(\mathcal{R}^{(k)},\mathcal{R}^{\infty})$ computed with (\ref{HausForward}): 
starting from a slanted line segment for $\mathcal{R}^{(1)}$ as in Figure \ref{freach}, 
it can be observed that the forward reachable sets $\mathcal{R}^{(k)}$ converge rapidly, 
as confirmed with the error bound displayed in Figure \ref{littleplots} (upper plot). 

\begin{figure}[b]
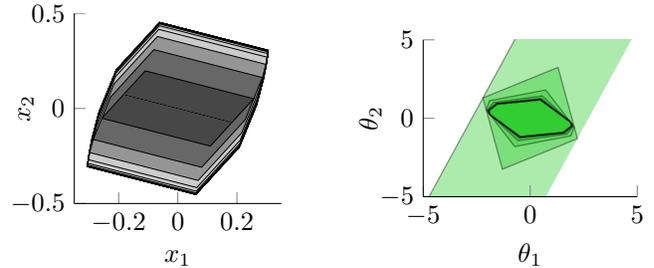

\begin{subfigure}[t]{.5\columnwidth}
 \input{freach}  \caption{The first 20 iterations of the forward reachable set $\mathcal{R}^{(k)}$, $k=1,\ldots,20$ for the case study. 
The reachable sets grow in size from dark grey ($k=1$) to light grey ($k=20$), 
so that \mbox{$\mathcal{R}^{(k-1)}\subseteq\mathcal{R}^{(k)}$}.}\label{freach}
\end{subfigure}\hfill
\begin{subfigure}[t]{.45\linewidth}
 \input{invarianceset}
\caption{The feasible sets for the $k$-bounded invariance property $\always^k\iota$, 
with $k=1,\ldots,20$, obtained for the case study.}\label{invarianceset}
\end{subfigure}  
\caption{Reachable and feasible sets for unbounded-time verification problem.}
\end{figure}
Based on $\mathcal{R}^{(k)}$, 
the feasible set for the $k$-bounded invariance $\always^k\iota$ can be computed as $\Theta_{\always^k\iota}=\Theta_{\iota}\big(\mathcal{R}^{(k)}\big)$. 
The feasible sets $\Theta_{\always^k\iota}$ with $k=1,\ldots,20$ are plotted in Figure \ref{invarianceset}. 
Observe that the feasible set $\Theta_{\always^1 \iota}$ is not bounded, but for $k\geq 2$ the feasible sets are bounded and, 
as expected, decrease in size with time. 
In Figure \ref{littleplots} (middle plot) bounds on the Hausdorff distances $\delta_H(\Theta_{\always\iota},\Theta_{\always^k\iota})$ are given for $k=2,\ldots,20$  
(no finite bound is computed for the index $k=1$, 
since for that instance the feasible set is not bounded).  
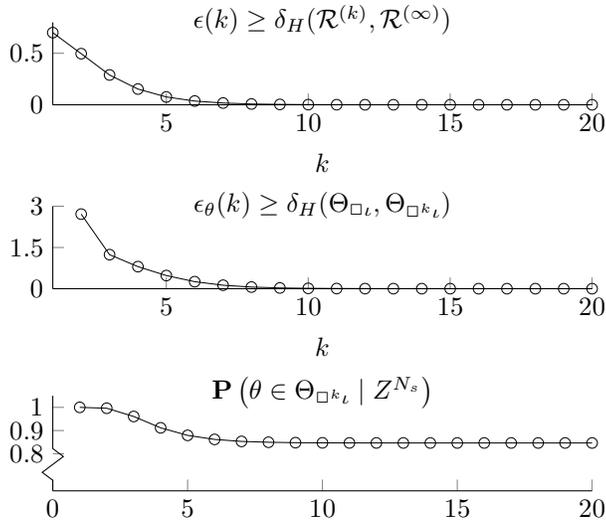
\begin{figure}[t]  {\flushright
\begin{tikzpicture}

\begin{axis}[%
width=.85\columnwidth,
height=0.13\columnwidth,
scale only axis,ytick={0,0.5,1},
xmin=1,title style ={yshift=-0.5cm},
xmax=20,
ymin=0,
ymax=0.8,xlabel={$k$},title={$\epsilon(k)\geq \delta_H(\mathcal{R}^{(k)},\mathcal{R}^{(\infty)})$},axis x line*=bottom,
axis y line*=left
]
\addplot [color=black,solid,mark=o,line width=0.3pt,mark options={solid},forget plot]
  table{1	0.700413387228011	
2	0.495998095406512	
3	0.289351218142438	
4	0.152722064061578	
5	0.0759854314748642	
6	0.0363743282993569	
7	0.0169467993818023	
8	0.00773881167895524	
9	0.00347990098165787	
10	0.00154580642544886	
11	0.000679888858562359	
12	0.000296590441508194	
13	0.000128492719848711	
14	5.53405185069649e-05	
15	2.37138401680709e-05	
16	1.01166739248384e-05	
17	4.29915265174852e-06	
18	1.82066361204476e-06	
19	7.6866960586732e-07	
20	3.23630575003009e-07	
};
\end{axis}
\end{tikzpicture}%
\begin{tikzpicture}

\begin{axis}[%
width=.85\columnwidth,
height=0.13\columnwidth,
scale only axis,ytick={0,1.5,3},xmin=1,
xmax=20,
ymin=0,
ymax=3,
xlabel={$k$},title={$\epsilon_\theta(k)\geq\delta_H(\Theta_{\always\iota},\Theta_{\always^k\iota})$},axis x line*=bottom,title style={yshift=-0.5cm},
axis y line*=left
]
\addplot [color=black,solid,mark=o,line width=0.3pt,mark options={solid},forget plot]
 table{2	2.71491664421836	
3	1.23903856161811	
4	0.8027017478536	
5	0.478287831970752	
6	0.255164278906496	
7	0.125977320134333	
8	0.0592016826995949	
9	0.0269831887729221	
10	0.0120604133185736	
11	0.0053191977543066	
12	0.00232325193852511	
13	0.00100704769556836	
14	0.000433826059764371	
15	0.000185916455465867	
16	6.00050231223236e-05	
17	3.37074164409766e-05	
18	1.06937448857388e-05	
19	4.51482953303592e-06	
20	1.91961348849067e-06	
};
\end{axis}
\end{tikzpicture}%
\begin{tikzpicture}

\begin{axis}[%
width=.85\columnwidth,
height=0.15\columnwidth,
scale only axis,
xmin=0,  ytick={0.8,0.9,1},
xmax=20,
ymin=0.64,
ymax=1.05,axis y discontinuity=crunch,axis x line*=bottom,
axis y line*=left,
title={$\p{\pa\in\Theta_{\always^k\iota}\mid Z^{N_s}}$},title style={yshift=-0.5cm}
]
\addplot[color=black,solid,mark=o,line width=0.3pt,mark options={solid},forget plot]
  table{1	0.999997792232498	
2	0.996169336112153	
3	0.960661046420747	
4	0.911651998735544	
5	0.879154739329903	
6	0.861731416647101	
7	0.853125627350337	
8	0.84945363490206	
9	0.847634055382528	
10	0.846934766776381	
11	0.846593193264184	
12	0.846525035481775	
13	0.846436419733163	
14	0.846392060024844	
15	0.846381714991712	
16	0.846378114377062	
17	0.846378114377062	
18	0.846378114377062	
19	0.846378114377062	
20	0.846380124541238	
};
\end{axis}
\end{tikzpicture}
\caption{(Upper plot) Error bound on the approximation level of the $k$-th forward reachable sets, 
which is such that $\mathcal{R}^{(\infty)} \subseteq \mathcal{R}^{(k)}+\epsilon(k) $ for $k=1,\ldots,20$.
(Middle plot) The Hausdorff distance $\epsilon_\pa(k)$ between $\Theta_{\always^k \psi}$ and $\Theta_{\always \psi}$ with $k=2,\ldots,20$, 
obtained for the case study.(Lower plot) Confidence that $\s\vDash\always^k\iota$ for $k=1,\ldots,20$ for the case in Section \ref{ex:case}, 
with a new experiment consisting of $200$ samples collected as $Z^{N_s}$. }
\label{littleplots}
\end{figure}
Let us conclude this case study looking at confidence quantification, 
as a function of the time horizon. 
Figure \ref{littleplots} (lower plot) represents the confidence over the property $\p{\pa\in \Theta_{\always^k\iota}\mid Z^{N_s}}$, for indices $k=1,\ldots,20$. 
Unlike the case discussed in Section \ref{ex:case}, 
which focused on looking at statistics of the confidence via mean and variance drawn over multiple experiments, 
we zoom in on asymptotic properties by considering a data set $Z^{N_s}$ comprising a single trace
made up of 200 measurements, 
simulated under the same conditions as in Section \ref{ex:case}, and with $\pa_0=[1\ 0]^T$.  
From the resulting probability density distribution $\pd{\pa\mid Z^{N_s}}$, 
it may be observed that the confidence converges rapidly to a nonzero value. 
 \subsection{Discussion on the Generalisation of the Results}\label{ssec:disc}
\reversemarginpar
The discussed approach based on polytopes allows for analytical expressions of the feasible set, 
however the implementation may not scale to models with very large dimension: 
in particular,   
the number of half-planes characterising the feasible set may increase with the time bound of the LTL formula $\psi$ 
(that is, with the repeated application of the $\nex$ operator), 
and with the cardinality of the atomic propositions in the alphabet $\Sigma$. 
Still, note that these computations are essentially quite similar to known reachability computations, 
therefore the method is extendable well beyond the 2-dimensional case study, 
especially when applying sophisticated reachability analysis tools in the literature. 
Therefore the discussed limitations related to the current implementation of the approach, 
ought to be dealt with in the future by the use of tailored and less na\"ive computational approaches. 

In the discussion of model selection, we hinted at possible generalisation beyond linearly-parameterised model sets. 
Future extension will deal with hybrid models,  
since when systems are not linear, their (local) behaviour is often well approximated with piecewise-linear dynamical models.

This paper has discussed the formal verification of physical systems with partly unknown dynamics, 
by introducing a Bayesian framework allowing for the efficient incorporation of measurement data and prior information within a verification procedure based on safety analysis. 
This formal approach has allowed for the computation of the confidence level over the validity of a property of interest on the unknown system. 
The method has been applied to the verification of LTI models of systems over bounded and unbounded safety properties, 
and its computational overhead has been discussed at length. 

Looking forward, 
current work targets the extension of the applicability of tractable solutions to model-based and data-driven verification over complex physical systems.   
We are presently working to extensions of the considered set of logic formulae of interest, 
and plan to employ experiment design to optimise the input-output signal interaction for efficient data usage over general classes of models, as initially attempted in \cite{ACCSofie}.
Additionally, the design of control policies that optimise properties of interest over partly unknown systems is topic of current work.

\bibliographystyle{abbrv}  
\bibliography{library}

\appendix
 
\section*{Derivation of the Bounds in Section \ref{sec:ubp}}
 
\textbf{1. Hausdorff distance of forward reachable mappings.} 
We only sketch the method to bound the Hausdorff distance, 
whereas a more formal derivation can be found in the literature on robustly positively invariant sets \cite{Blanchini2007}.     
 
The $k$-step forward reachable set equals
 \[\mathcal{R}^{(k)}:=\bigcup_{i=1}^k\left\{\sum_{j=1}^{i} A^{j-1}Bu(i-j), \ \textmd{ for } u(j)\in \A_{ver}  \right\}.\]
For $0\in \A_{ver}$, the minimal invariant set $\mathcal{R}^{\infty}$ can be written as
 \begin{align}\mathcal{R}^{(\infty)}&:=\left\{\sum_{j=0}^{i-1} A^{j}Bu(j)+A^i \sum_{k=0}^\infty A^{k}Bu(k), \textmd{ for } u(\cdot)\in \A_{ver}  \right\}.
\end{align} 
If the spectral radius of a $A$  is strictly smaller than 1, $\rho(A)<1$, then  
\begin{align}
\mathcal{R}^{(\infty)} \subseteq\mathcal{R}^{(k)} + \epsilon(k) \mathcal{B},
  \end{align}
  with 
  \[
  A^k \sum_{i=0}^\infty A^{i}Bu(k)\subseteq \epsilon(k)\mathcal{B}, \textmd{ for } u(\cdot)\in \A_{ver}.   
  \] 
 Note that $\epsilon(k)$ is bounded for $\rho(A)<1$. 
For a matrix $A$ without defective eigenvalues, i.e. where the eigenvectors form a complete basis, 
this $L_1$ norm can be easily bounded using the spectral radius of $A$, by selecting 
 \[\epsilon(k)= \frac{|\rho(A)|^k }{1-|\rho(A)|} \|B\|_2 \max_{u\in \A_{ver}}{(|u|)}\geq \|A^k\|_{2} \sum_{i=0}^\infty \|A^{i}B\|_2|u(k)|. \]
 In case that the matrix $A$ is defective, we opt to bound the $L_1$-norm by exploiting absolute sum of the $L_2$ induced norm for $A^i$ $i\rightarrow \infty$:
  \(\sum_{i=0}^\infty \|A^{i}\|_{2} \).  Note that \(\|A^{i}\|_{2}\) converges to 0 for $i\rightarrow \infty$ since $\rho(A)<1$, therefore there exists a finite $l$ such that \(\|A^{l}\|_{2}<1\) and we can upper bound the absolute sum as 
   \begin{align*}\sum_{i=0}^\infty \|A^{i}\|_{2}&\leq  \left(\sum^{l-1}_{i_1=0}  \|A^{i_1}\|_{2}\right )\left(\sum_{i_2=0}^\infty \|A^{l}\|_{2}^{i_2}\right)\\& =\left(\sum^{l-1}_{i_1=0}  \|A^{i_1}\|_{2} \right)\frac{1}{1-\|A^{l}\|_{2}} .\end{align*}
   Thus in general, the Hausdorff distance can be bounded as
    \[\delta_H(\mathcal{R}^{(k)},\mathcal{R}^{(\infty)})\leq \epsilon(k)=  \|A^k\|_{2}\max_{u\in \A_{ver}}{(|u|)}  c_1, \] 
    with $c_1=\frac{\left(\sum^l_{i_1=0}  \|A^{i_1}\|_{2} \right)}{1-\|A^{l}\|_{2}}\|B\|_2 $ for  $l$ such that $\|A^{l}\|_{2}<1$. 
    Note that $c_1$ can be replaced by any bound on the $L_1$ norm of the dynamical system formed by $(A,B)$.

In case that $\X_{ver}\not \subseteq \mathcal{R}^\infty $ then  the forward reachable  iteration can be rewritten as
\[\mathcal{R}^{(k)}_{\X_{ver}}= \left( \bigcup_{i=0}^k A^i {\X}_{ver}\right)
+\mathcal{R}^{(k)},  
\]
for which we know that
 \[\mathcal{R}^{(\infty)}_{\X_{ver}}\subseteq \mathcal{R}^{(k)}_{\X_{ver}}+\epsilon(k)+\|A\|^{k+1}  \delta_H(\X_{ver},\{0\}). 
\]
Thus the Hausdorff norm is upper bounded as\\* $\delta_H(\mathcal{R}_{\X_{ver}}^{(k)},\mathcal{R}_{\X_{ver}}^{(\infty)})\leq \epsilon(k)+\|A^{k+1} \| \delta_H(\X_{ver},\{0\})$.
 
 \bigskip
 
 \noindent\textbf{2. Hausdorff distance on feasible sets.}
 Suppose that the conditions in Lemma \ref{lem:robust} hold for $\mathcal{R}^{(k)}_{\X_{ver}}$, then we can compute a value for $\epsilon_\pa$ such that \(
 \Theta_{\psi}(\mathcal{R}^{(k)}_{\X_{ver}}) \subseteq 
\Theta_{\psi}(\mathcal{R}^{(k)}_{\X_{ver}}+\epsilon_x\mathcal{B})+\epsilon_\pa \mathcal{B}, \) 
 where $\epsilon_x$ is a bound on the Hausdorff distance $\delta_H(\mathcal{R}_{\X_{ver}}^{(k)},\mathcal{R}_{\X_{ver}}^{(\infty)})$.

The set operation $\Theta_{\psi}(\cdot)$ is monotonically decreasing, therefore 
 \(\Theta_{\psi}(\mathcal{R}^{(k)}_{\X_{ver}}+\epsilon(k)\mathcal{B})\subseteq\Theta_{\always\psi}=\Theta_{\psi}\left(\mathcal{R}^\infty_{\X_{ver}}\right) \subseteq \Theta_{\psi}\left(\mathcal{R}^{(k)}_{\X_{ver}}\right)= \Theta_{\always^k\psi},\) 
 and 
\(\Theta_{\always^k\psi} \subseteq  \Theta_{\psi}(\mathcal{R}^{(k)}_{\X_{ver}}+\epsilon(k)\mathcal{B})+\epsilon_\pa \mathcal{B} \subseteq\Theta_{\always\psi}+\epsilon_\pa\mathcal{B}, \)
and  \[\Theta_{\always\psi}\subseteq \Theta_{\always^k\psi}\subseteq\Theta_{\always\psi}+\epsilon_\pa\mathcal{B}. \]
Based on Lemma \ref{lem:robust}, with $ \epsilon_p=   \max_{p_i}\frac{|A_{p_i}|}{|b_{p_i}|  }$, we obtain 
\begin{align*}
\epsilon_\pa&=  \frac{\epsilon_x\epsilon_{p} \max_i(\|v_i\|)^2}{1+\epsilon_x\epsilon_p\max_i( \|v_i\|) }\leq \epsilon_x \epsilon_{p} \max_i(\|v_i\|)^2.
\end{align*}
Note that since $\|A^k\|_{2}$ converges to 0 for $k\rightarrow \infty$ for $\rho(A)<1$, and since $\max_i(\|v_i\|)^2$ is not increasing, the error $\epsilon_\pa$ also converges to $0$.

\end{document}